\documentclass[lettersize,journal]{IEEEtran}
\usepackage{amsmath,amsfonts}
\usepackage{amsmath,amssymb,amsthm}

\newtheorem{theorem}{Theorem}
\newtheorem{proposition}{Proposition}

\newtheorem{corollary}{Corollary}

\usepackage{algorithmic}
\usepackage{algorithm}
\usepackage{array}
\usepackage[caption=false,font=normalsize,labelfont=sf,textfont=sf]{subfig}
\usepackage{textcomp}
\usepackage{stfloats}
\usepackage{url}
\usepackage{verbatim}
\usepackage{booktabs}
\usepackage{siunitx}
\usepackage{amssymb}
\usepackage{graphicx}
\usepackage{multirow}
\def\BibTeX{{\rm B\kern-.05em{\sc i\kern-.025em b}\kern-.08em
    T\kern-.1667em\lower.7ex\hbox{E}\kern-.125emX}}
\usepackage{balance}
\begin{document}
\title{A Hypothesis-Testing Analysis of Blind Recognition for Polar Codes}
\author{Changwei Tu, Cheng Yang, Xianzhao Feng, and Kai Niu
\thanks{This work is supported by the National Natural Science Foundation of China under Grant 62321001 and Grant 62471054.
  \emph{(Corresponding author: Kai Niu.)}
  
The authors are with the Key Laboratory of Universal Wireless Communications, Ministry of Education,
Beijing University of Posts and Telecommunications, Beijing, China.
Email: \{tuchangwei, yscheng, foreseexz, niukai\}@bupt.edu.cn.}
  }

\markboth{Journal of \LaTeX\ Class Files,~Vol.~18, No.~9, September~2020}%
{How to Use the IEEEtran \LaTeX \ Templates}

\maketitle

\begin{abstract}
Blind recognition of polar-coded transmissions is an important task in non-cooperative wireless forensics and security-oriented signal analysis.
When the code length is known or has been estimated, recovering the frozen/information bit-position pattern is a key step in identifying the underlying polar-code structure and enabling subsequent information recovery from intercepted observations.
In this paper, blind recognition of polar codes is investigated from a hypothesis-testing perspective under the successive cancellation (SC)-based synthetic bit-channel representation. 
First, under an ideal SC-consistent condition, we formulate position-wise recognition as a binary hypothesis test between frozen-position and information-position models, which provides a theoretical benchmark for analyzing their intrinsic distinguishability.
Second, we show that the adopted soft recognition metric admits an exact shifted log-likelihood-ratio interpretation. This justifies $\ln 2$ as the neutral threshold under equal priors and costs, while unequal priors or costs lead to the corresponding Bayesian threshold shift.
Third, under the ideal SC-consistent model and this neutral setting, we derive upper and lower bounds on the position-wise and sequence-level recognition error probabilities with multiple independent observations. The resulting overlap coefficient is further related to the classical Bhattacharyya parameter, establishing an interpretable link between blind-recognition difficulty and polar synthetic-channel reliability.
Simulation results show that the derived bounds characterize the recognition performance under the ideal SC-consistent model and capture the effects of code length, the number of intercepted observations, and SNR.
Further paired comparisons in the tested settings indicate that the SC-consistent recursion provides a good sequence-level match to the realistic SC-recursive procedure.
\end{abstract}

\begin{IEEEkeywords}
Polar codes, blind recognition, hypothesis testing, Bhattacharyya parameter, wireless forensics.
\end{IEEEkeywords}

\section{Introduction}

\IEEEPARstart{I}{n} non-cooperative wireless environments, a receiver may intercept and analyze transmissions without protocol assistance, signaling metadata, or prior knowledge of the transmitter configuration \cite{Dobre2007,Shiu2011}. Such scenarios commonly arise in spectrum monitoring, wireless forensics, electronic surveillance, and security-oriented signal analysis \cite{MoosaviLarsson2011,Xu2011}. 
As a result, key physical-layer parameters must be inferred directly from noisy observations rather than obtained from cooperative signaling. Accurate recovery of the underlying transmission structure is therefore essential for reliable downstream processing, including protocol interpretation and forensic assessment.

In this context, channel-coding information is of particular importance because it reflects block-level constraints beyond front-end descriptors. 
Unlike modulation format and scrambling structure, channel codes impose global algebraic and statistical dependencies on received sequences, directly affecting reliable decoding and information recovery \cite{MoosaviLarsson2014,Dehdashtian2021}. 
Accordingly, blind channel-code recognition provides channel-coding evidence for protocol reverse analysis and is relevant to the security-oriented assessment of wireless transmissions \cite{Mukherjee2014,PoorSchaefer2017}.

Polar codes, originally introduced by Ar{\i}kan, are the first class of channel codes that are provably capacity-achieving for binary-input memoryless symmetric channels \cite{Arikan2009}. Owing to their recursive structure and favorable performance--complexity tradeoff, polar codes have been extensively studied in modern communication systems \cite{TalVardy2015,NiuChen2012}. In particular, their adoption in 5G NR control-channel coding \cite{TS38212} makes polar-coded transmissions an important object of non-cooperative wireless forensics and security-oriented signal analysis. Blind detection of polar-coded control messages has been investigated in 5G-related settings, where the receiver may need to analyze polar-coded transmissions without transmitter-side configuration information \cite{Condo2017}. Beyond detection-level analysis, a more detailed structural-recognition problem is to recover the internal coding constraints of polar codes from limited intercepted observations. Once the code length is known or has been estimated, the frozen/information position pattern becomes a key structural parameter because it determines the actual polar code used by the transmitter.

Existing studies on blind recognition of polar codes can be roughly classified according to the type of information used for recognition. The first group relies on hard decisions, thresholding rules, matrix structures, or recursive inheritance properties to distinguish frozen and information positions \cite{Liu2022,Yi2023,Xu2026}. The second group exploits soft observations and reliability information extracted from the received signal to improve robustness under noisy conditions \cite{Wang2023}. These methods have shown that frozen/information position patterns can be inferred from non-cooperative observations, but their performance may still degrade when the observations are limited, the channel condition is unfavorable, or the recognition metric is not well matched to the underlying statistical model. More recently, \cite{TuBSCL} combines blind recognition with successive cancellation list (SCL) decoding, so that soft information generated during the decoding process can be exploited more effectively to improve the recognition success rate.

Among the above studies, the SCL-aided framework in \cite{TuBSCL} is the closest to the present work, because its recognition rule separates frozen and information positions by using a soft metric together with a reference threshold. Although this rule improves empirical recognition performance, the statistical roles of the metric and the threshold have not been fully characterized. In particular, it remains unclear how the adopted metric should be interpreted as evidence in a position-wise frozen/information test, why the reference threshold is justified under the symmetric equal-prior and equal-cost setting, and how the resulting recognition reliability depends on the number of intercepted observations and the reliability of the corresponding polar synthetic channels.

Hypothesis testing provides a classical framework for distinguishing competing probabilistic models from observations~\cite{VanTrees2001}. 
It is therefore suited to the present problem, where frozen and information positions induce different likelihood models under the SC-based synthetic bit-channel representation. 
In this paper, we analyze polar-code blind recognition from this viewpoint. 
Conditioned on the ideal SC-consistent event, each position-wise soft-decision problem is formulated as a binary test between a frozen-position model and an information-position model. 
We prove that the soft recognition metric used in \cite{TuBSCL} is the corresponding log-likelihood ratio shifted by $\ln 2$. 
This result explains the threshold $\ln 2$ for the averaged metric under the equal-prior and equal-cost local Bayesian rule, while unequal local priors or costs lead to the corresponding Bayesian threshold shift. 
We further derive position-wise and sequence-level upper and lower bounds for multiple independent intercepted observations under the neutral rule and the ideal SC-consistent model. 
The bounds are governed by the overlap coefficient between the two position-type distributions, which is further related to the classical Bhattacharyya parameter of polar synthetic channels~\cite{Arikan2009}.

The main contributions of this paper are summarized as follows.
\begin{enumerate}

    \item 
          We establish a hypothesis-testing framework for blind recognition of polar codes under the SC-based synthetic bit-channel representation. Under an ideal SC-consistent condition, the recognition of each source position is formulated as a binary statistical decision problem between a frozen-position model and an information-position model, which provides a theoretical benchmark for characterizing the intrinsic distinguishability of frozen and information positions.

     \item 
          Within this framework, we establish a decision-theoretic interpretation of the soft recognition rule. Specifically, under the ideal SC-consistent synthetic-channel model, we prove that the adopted soft metric is exactly a shifted log-likelihood ratio for discriminating between the frozen-position and information-position hypotheses. Consequently, the reference threshold $\ln 2$ arises naturally under equal priors and costs, while unequal priors or costs induce the corresponding threshold shift.

    \item 
         Building on this neutral setting, we derive position-wise and sequence-level upper and lower bounds for multiple independent intercepted observations under the ideal SC-consistent model. We further relate the resulting overlap coefficient to the classical Bhattacharyya parameter of polar synthetic channels, thereby linking blind-recognition reliability to conventional polar-code reliability analysis.

\end{enumerate}

The remainder of this paper is organized as follows.
Section~II introduces the system model, SC decoding for polar codes, and binary hypothesis testing.
Section~III formulates blind recognition as a hypothesis-testing problem under the SC-based synthetic bit-channel representation and presents the statistical interpretation of the recognition metric and its reference threshold.
Section~IV develops the corresponding performance analysis, including upper and lower bounds under the neutral symmetric formulation, and establishes their connection to the classical Bhattacharyya parameter of polar synthetic channels.
Section~V provides simulation results to support the proposed analysis.
Finally, Section~VI concludes this paper.

\section{Preliminaries}
\subsection{System Model}

Let $\mathcal{C}(N,K)$ denote a polar code of length $N=2^n$ and dimension $K$. 
After channel polarization, the $K$ most reliable synthetic subchannels are assigned to information bits, while the remaining $N-K$ subchannels are assigned to frozen bits. 
Let $\mathcal{I}$ and $\mathcal{F}$ denote the information set and frozen set, respectively. 
For the source vector $\mathbf{u}=[u_0,u_1,\ldots,u_{N-1}]$, with $u_i=0$ for $i\in\mathcal{F}$, the corresponding codeword is generated as
\begin{equation}
\mathbf{c}=\mathbf{u}\mathbf{G}_N,
\end{equation}
where
\begin{equation*}
\mathbf{G}_N=\begin{bmatrix}1&0\\1&1\end{bmatrix}^{\otimes n}
\end{equation*}
is the generator matrix.

In this work, the code length $N$ is assumed known at the receiver, whereas the dimension $K$ and the frozen/information position pattern are unknown.
Suppose that $M$ independent codewords, each of length $N$, are transmitted over an additive white Gaussian noise (AWGN) channel using binary phase-shift keying (BPSK) modulation. 
Then, the received sample corresponding to the $j$-th coded bit of the $m$-th codeword is
\begin{equation*}
r_{m,j}=x_{m,j}+n_{m,j},
\qquad
0\le m\le M-1,\;
0\le j\le N-1,
\end{equation*}
where $x_{m,j}=1-2c_{m,j}\in\{+1,-1\}$ is the BPSK-modulated symbol corresponding to $c_{m,j}$, and $n_{m,j}\sim\mathcal{N}(0,\sigma^2)$ denotes the Gaussian noise sample. 
The corresponding channel log-likelihood ratio (LLR) is
\begin{equation}
\lambda_{m,j}
=
\ln
\frac{p(r_{m,j}|c_{m,j}=0)}
     {p(r_{m,j}|c_{m,j}=1)}
=
\frac{2r_{m,j}}{\sigma^2}.
\end{equation}
Collecting the channel LLRs of all $M$ received vectors yields
\begin{equation}
\boldsymbol{\Lambda}^{\rm ch}=
\begin{bmatrix}
\lambda_{0,0} & \lambda_{0,1} & \cdots & \lambda_{0,N-1}\\
\lambda_{1,0} & \lambda_{1,1} & \cdots & \lambda_{1,N-1}\\
\vdots & \vdots & \ddots & \vdots\\
\lambda_{M-1,0} & \lambda_{M-1,1} & \cdots & \lambda_{M-1,N-1}
\end{bmatrix},
\end{equation}
which serves as the input for subsequent blind recognition.

\subsection{SC Decoding of Polar Codes}

For a polar code of length $N$, SC decoding proceeds recursively over the decoding tree. For notational simplicity, we consider one received vector and denote by $\lambda_j$ the channel LLR associated with the received sample $r_j$. For a length-$2$ polar code, the synthetic LLR associated with $u_0$ is first computed by the $f$-operation as
\begin{equation}
\Lambda_0
=
f(\lambda_0,\lambda_1)
=
\lambda_0\boxplus\lambda_1,
\end{equation}
where
\begin{equation*}
a\boxplus b
=
2\operatorname{arctanh}\!\left(\tanh\frac{a}{2}\tanh\frac{b}{2}\right).
\end{equation*}
After obtaining $\hat{u}_0$ by hard decision, the synthetic LLR associated with $u_1$ is computed through the $g$-operation as
\begin{equation}
\Lambda_1
=
g(\lambda_0,\lambda_1,\hat{u}_0)
=
(-1)^{\hat{u}_0}\lambda_0+\lambda_1 .
\end{equation}

For a general polar code, the same $f$- and $g$-operations are recursively applied over the decoding tree. In this way, SC decoding successively produces the synthetic LLRs
\begin{equation*}
\Lambda_0,\Lambda_1,\ldots,\Lambda_{N-1},
\end{equation*}
where $\Lambda_i$ denotes the LLR associated with the $i$-th source bit $u_i$ under the SC recursion.

When the frozen set is known, the standard SC hard-decision rule is
\begin{equation}
\hat{u}_i=
\begin{cases}
0, & i\in\mathcal{F},\\
0, & i\in\mathcal{I},\ \Lambda_i\ge 0,\\
1, & i\in\mathcal{I},\ \Lambda_i<0.
\end{cases}
\end{equation}
By recursively applying the $f$- and $g$-operations, SC decoding outputs the estimated source vector
\[
\hat{\mathbf{u}}=[\hat{u}_0,\hat{u}_1,\ldots,\hat{u}_{N-1}].
\]

The above recursion also provides the synthetic bit-channel representation used in polar-code analysis. Specifically, the $i$-th SC synthetic channel is denoted by $W_i(\cdot|\cdot)$. Following the conventional notation for polar synthetic channels, let $y$ denote the generic output of this synthetic channel. The corresponding synthetic LLR is defined as
\begin{equation}
\Lambda_i(y)
=
\ln\frac{W_i(y|0)}{W_i(y|1)} .
\end{equation}
This representation will be used in the following sections to formulate the blind recognition of frozen and information positions.

\subsection{Binary Hypothesis Testing} 
In a binary hypothesis-testing problem, an observation $Y$ is assumed to be generated according to one of two probability laws corresponding to hypotheses $\mathcal{H}_0$ and $\mathcal{H}_1$, namely,
\begin{equation}
\mathcal{H}_0: Y\sim P,
\qquad
\mathcal{H}_1: Y\sim Q .
\end{equation}

Throughout this paper, when a probability law is evaluated at an observation, it denotes the corresponding density or probability mass function with respect to the underlying common dominating measure. 
Thus, $P(y)$ and $Q(y)$ denote the likelihoods of the observation $y$ under $\mathcal{H}_0$ and $\mathcal{H}_1$, respectively. 
The log-likelihood ratio for discriminating $\mathcal{H}_1$ from $\mathcal{H}_0$ is defined as
\begin{equation}
\Gamma(Y)\triangleq \ln \frac{Q(Y)}{P(Y)} .
\end{equation}

Under the Bayesian criterion, let $\pi_0$ and $\pi_1$ denote the prior probabilities of $\mathcal{H}_0$ and $\mathcal{H}_1$, respectively, and let $c_{0\to 1}$ and $c_{1\to 0}$ denote the corresponding misclassification costs. 
For a given observation $Y$, deciding $\mathcal{H}_1$ incurs the conditional risk proportional to $\pi_0 c_{0\to 1}P(Y)$, whereas deciding $\mathcal{H}_0$ incurs the conditional risk proportional to $\pi_1 c_{1\to 0}Q(Y)$. 
Therefore, the optimal Bayesian decision rule is
\begin{equation}
\Gamma(Y)
\mathop{\gtrless}_{\mathcal{H}_0}^{\mathcal{H}_1}
\tau,
\qquad
\tau \triangleq \ln \frac{\pi_0 c_{0\to 1}}{\pi_1 c_{1\to 0}} .
\end{equation}
In particular, when the two hypotheses are equiprobable and the two types of decision errors have the same cost, the threshold reduces to $\tau=0$.

\section{Hypothesis-Testing Analysis of Blind Recognition}

In this section, we specialize the binary hypothesis-testing framework in Section~II to blind recognition of polar codes under an ideal SC-consistent setting. 
The goal is to clarify the statistical meaning of the recognition metric and to derive the corresponding Bayesian decision threshold. 

\subsection{Log-Likelihood-Ratio Interpretation of the Recognition Metric}

For the $i$-th source-bit decision under SC decoding, the SC recursion
depends on the preceding decisions $\hat{u}_0^{i-1}$. Accordingly, define
\begin{equation}
\mathcal{A}_i \triangleq
\left\{\hat{u}_0^{i-1}=u_0^{i-1}\right\},
\label{eq:Ai}
\end{equation}
as the event that all decisions associated with source-bit positions
$0,\ldots,i-1$ are correct.

This event represents the SC-consistent condition under which the synthetic LLR associated with the $i$-th position follows the standard polar synthetic-channel representation. 
Conditioned on this ideal event, the position-type decision at index $i$ can be modeled as a binary hypothesis test between
\begin{align*}
\mathcal{H}_{F,i}&:\ \text{the $i$-th position is a frozen position},\\
\mathcal{H}_{I,i}&:\ \text{the $i$-th position is an information position}.
\end{align*}

Using the synthetic-channel notation introduced in Section~II-B, the source bit is fixed to $0$ under the frozen-position hypothesis. 
In contrast, under the information-position hypothesis, the source bit is modeled as equiprobable, i.e., $u_i\sim\mathrm{Bernoulli}(1/2)$. 
Therefore,  the corresponding single-observation probability laws are characterized by the likelihoods
\begin{equation*}
P_i(y) \triangleq W_i(y|0),
\qquad
Q_i(y) \triangleq \frac{1}{2}W_i(y|0)+\frac{1}{2}W_i(y|1),
\end{equation*}
where $P_i$ corresponds to the frozen-position model and $Q_i$ corresponds to the information-position model. 
Consistent with the convention introduced in Section~II-C, $P_i(y)$ and $Q_i(y)$ denote the likelihoods of the synthetic observation $y$ under $\mathcal{H}_{F,i}$ and $\mathcal{H}_{I,i}$, respectively.

Using the synthetic LLR $\Lambda_i(y)$ defined in Section~II-B, and motivated by the frozen-bit metric increment used under the frozen-bit hypothesis in~\cite{TuBSCL}, define the single-observation recognition metric as
\begin{equation}
C_i(y) \triangleq \ln\!\bigl(1+e^{-\Lambda_i(y)}\bigr).
\end{equation}
According to the binary hypothesis-testing formulation in Section~II-C, the log-likelihood ratio for discriminating the information-position hypothesis from the frozen-position hypothesis is
\begin{equation}
\Gamma_i(y) \triangleq \ln \frac{Q_i(y)}{P_i(y)}.
\end{equation}
The following theorem shows that $C_i(y)$ is exactly a constant-shifted version of this position-type log-likelihood ratio under the SC-consistent condition.

\begin{theorem}
\label{thm:metric_relation}
Under $\mathcal{A}_i$, the single-observation recognition metric satisfies
\begin{equation}
C_i(y)=\Gamma_i(y)+\ln 2.
\end{equation}
\end{theorem}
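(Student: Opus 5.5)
The identity is a direct algebraic computation, so the plan is to expand $\Gamma_i(y)$ from the definitions of $P_i$ and $Q_i$ and rewrite it in terms of the synthetic LLR $\Lambda_i(y)$. The role of $\mathcal{A}_i$ is only to guarantee that the SC recursion at index $i$ runs with the correct past decisions $\hat{u}_0^{i-1}=u_0^{i-1}$. Under that event, the synthetic observation $y$ is distributed according to the standard synthetic channel $W_i(\cdot|u_i)$, and the LLR computed by SC coincides with $\Lambda_i(y)=\ln\bigl(W_i(y|0)/W_i(y|1)\bigr)$ from Section~II-B. I will state this identification first, since it is what makes both $C_i(y)$ and $\Gamma_i(y)$ functions of the same quantity.

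Next, I substitute the likelihoods into the definition of $\Gamma_i(y)$:
\begin{equation*}
\Gamma_i(y)=\ln\frac{\tfrac12 W_i(y|0)+\tfrac12 W_i(y|1)}{W_i(y|0)}
=\ln\frac12+\ln\!\left(1+\frac{W_i(y|1)}{W_i(y|0)}\right).
\end{equation*}
Using $W_i(y|1)/W_i(y|0)=e^{-\Lambda_i(y)}$, this becomes $\Gamma_i(y)=-\ln 2+\ln\bigl(1+e^{-\Lambda_i(y)}\bigr)=-\ln 2+C_i(y)$. Rearranging gives $C_i(y)=\Gamma_i(y)+\ln 2$.

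The only point needing care is well-definedness. I must ensure $W_i(y|0)>0$ wherever the ratio is evaluated, so that both $\Lambda_i(y)$ and $\Gamma_i(y)$ are finite. For BPSK over AWGN, all channel densities are strictly positive. The $f$- and $g$-recursions (equivalently, the synthetic-channel construction as sums and products of positive densities) preserve positivity, so $W_i(y|u)>0$ for all $y$. This can be noted briefly; otherwise the identity can be stated on the set where $W_i(y|0)>0$, with both sides equal to $+\infty$ in the degenerate limit.

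I do not expect a real obstacle. The subtlest part is justifying that, conditioned on $\mathcal{A}_i$, the metric used in~\cite{TuBSCL} is evaluated at the genuine synthetic LLR $\ln\bigl(W_i(y|0)/W_i(y|1)\bigr)$ rather than at an LLR corrupted by erroneous earlier decisions. I would handle this by invoking the definition of $\mathcal{A}_i$ together with the standard genie-aided interpretation of SC decoding.
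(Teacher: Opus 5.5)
Your proposal is correct and is essentially the paper's proof. Both rewrite $e^{-\Lambda_i(y)}$ as $W_i(y|1)/W_i(y|0)$ and use $P_i(y)=W_i(y|0)$ and $2Q_i(y)=W_i(y|0)+W_i(y|1)$; the paper simply starts from $C_i(y)$ rather than $\Gamma_i(y)$, and it leaves implicit the positivity of the synthetic-channel densities that you check explicitly.
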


\begin{proof}
From the definition of $\Lambda_i(y)$ in Section~II-B, we have
\[
e^{-\Lambda_i(y)}=\frac{W_i(y|1)}{W_i(y|0)}.
\]
Hence,
\begin{align*}
C_i(y)
&= \ln\!\left(1+e^{-\Lambda_i(y)}\right) \\
&= \ln\!\left(1+\frac{W_i(y|1)}{W_i(y|0)}\right) \\
&= \ln\!\left(\frac{W_i(y|0)+W_i(y|1)}{W_i(y|0)}\right).
\end{align*}
Under $\mathcal{A}_i$, the definitions of $P_i(y)$ and $Q_i(y)$ imply that
\[
P_i(y)=W_i(y|0),
\qquad
2Q_i(y)=W_i(y|0)+W_i(y|1).
\]
Substituting these identities into the above expression yields
\begin{align*}
C_i(y)
&= \ln\!\left(\frac{2Q_i(y)}{P_i(y)}\right) \\
&= \ln \frac{Q_i(y)}{P_i(y)}+\ln 2 \\
&= \Gamma_i(y)+\ln 2.
\end{align*}
This completes the proof.
\end{proof}

Theorem~\ref{thm:metric_relation} shows that the recognition metric admits an exact shifted log-likelihood-ratio interpretation for distinguishing the frozen-position model from the information-position model under the SC-consistent setting. 
Under this interpretation, $\ln 2$ appears as the constant shift between the recognition metric and the corresponding position-type log-likelihood ratio. 
This constant shift is induced by the information-position model and the definition of the soft metric itself, and is independent of the prior probabilities assigned to the two position-type hypotheses. 
This interpretation is also compatible with the use of $\ln 2$ in~\cite{TuBSCL}, where $\ln 2$ serves as a neutral reference metric for the information-bit hypothesis.

Now suppose that $M$ independent intercepted observations are available for the $i$-th position. 
Let $Y_1,\ldots,Y_M$ denote the corresponding outputs of the $i$-th synthetic bit-channel induced by these observations under the same SC-consistent condition. 
Define the accumulated statistics
\begin{equation*}
S_i^{(M)} \triangleq \sum_{m=1}^{M} C_i(Y_m),
\qquad
G_i^{(M)} \triangleq \sum_{m=1}^{M} \Gamma_i(Y_m).
\end{equation*}
Then, by Theorem~\ref{thm:metric_relation},
\begin{equation}
S_i^{(M)} = G_i^{(M)} + M\ln 2.
\end{equation}
Therefore, the accumulated recognition metric can be understood as a standard multi-sample log-likelihood-ratio statistic after a constant shift. 
This equivalence provides the basis for the Bayesian threshold derived in the next subsection.

\subsection{Bayesian Threshold for the Averaged Recognition Metric}

We next derive the Bayesian threshold of the averaged recognition metric and then specialize it to the neutral symmetric setting.
Under $\mathcal{A}_i$, let $\pi_{F,i} \triangleq \Pr(\mathcal{H}_{F,i})$ and $\pi_{I,i} \triangleq \Pr(\mathcal{H}_{I,i})$ denote the local prior probabilities of the frozen-position and information-position hypotheses, respectively.
Let $c_{F\to I}$ and $c_{I\to F}$ denote the corresponding misclassification costs.
By the general Bayesian decision rule introduced in Section~II, the optimal local test based on $\Gamma_i(y)$ is
\begin{equation}
\Gamma_i(y)
\mathop{\gtrless}_{\mathcal{H}_{F,i}}^{\mathcal{H}_{I,i}}
\tau_i,
\qquad
\tau_i \triangleq \ln \frac{\pi_{F,i}c_{F\to I}}{\pi_{I,i}c_{I\to F}} .
\label{eq:bayes_rule_single}
\end{equation}

For $M$ independent observations, the accumulated log-likelihood ratio is
$G_i^{(M)}=\sum_{m=1}^{M}\Gamma_i(Y_m)$.
Since the prior probabilities and misclassification costs are assigned to the whole $M$-sample local decision problem, rather than to each individual observation, the corresponding Bayesian decision rule becomes
\begin{equation}
G_i^{(M)}
\mathop{\gtrless}_{\mathcal{H}_{F,i}}^{\mathcal{H}_{I,i}}
\tau_i .
\label{eq:bayes_rule_multi}
\end{equation}
Using the relation $S_i^{(M)}=G_i^{(M)}+M\ln 2$, the above rule can be equivalently rewritten in terms of the averaged recognition metric as
\begin{equation}
\frac{1}{M}S_i^{(M)}
\mathop{\gtrless}_{\mathcal{H}_{F,i}}^{\mathcal{H}_{I,i}}
\ln 2+\frac{\tau_i}{M}.
\label{eq:threshold_metric_multi_general}
\end{equation}
Thus, $\ln 2$ comes from the constant shift between the recognition metric and the log-likelihood ratio, whereas $\tau_i/M$ accounts for local prior or cost asymmetry.

In the non-cooperative scenario, no prior information is available about the underlying hypothesis. Therefore, we consider a neutral symmetric setting, in which the two hypotheses are assigned equal local priors and the two types of decision errors are assigned equal costs, i.e.,
\[
\pi_{F,i}=\pi_{I,i},
\qquad
c_{F\to I}=c_{I\to F}.
\]
Hence, $\tau_i=0$.
For a single observation, the optimal decision rule based on $\Gamma_i(y)$ reduces to
\[
\Gamma_i(y)
\mathop{\gtrless}_{\mathcal{H}_{F,i}}^{\mathcal{H}_{I,i}}
0 .
\]
By Theorem~\ref{thm:metric_relation}, this rule is equivalent to
\[
C_i(y)
\mathop{\gtrless}_{\mathcal{H}_{F,i}}^{\mathcal{H}_{I,i}}
\ln 2 .
\]
Similarly, for $M$ independent observations, \eqref{eq:threshold_metric_multi_general} reduces to
\begin{equation}
\frac{1}{M}S_i^{(M)}
\mathop{\gtrless}_{\mathcal{H}_{F,i}}^{\mathcal{H}_{I,i}}
\ln 2 .
\label{eq:threshold_metric_multi}
\end{equation}
Therefore, under the neutral equal-prior and equal-cost local decision rule adopted in this paper, the averaged recognition metric is compared with the threshold $\ln 2$.

Here, the equal-prior assumption is used only as a neutral local prior for each position-type test.
It does not imply that the actual numbers of frozen and information positions are equal, nor does it impose any global cardinality constraint on the information set.
In the following analysis and simulations, we adopt this neutral local rule.

To further illustrate the role of this neutral reference threshold, define
\[
\mu_{F,i} \triangleq \mathbb{E}_{P_i}[C_i(Y)],
\qquad
\mu_{I,i} \triangleq \mathbb{E}_{Q_i}[C_i(Y)].
\]

\begin{corollary}
\label{prop:mean_separation}
Under $\mathcal{A}_i$, the means of the recognition metric under the two hypotheses satisfy
\begin{equation}
\mu_{F,i}=\ln 2-D(P_i\|Q_i),
\qquad
\mu_{I,i}=\ln 2+D(Q_i\|P_i),
\label{eq:mean_relation}
\end{equation}
where $D(\cdot\|\cdot)$ denotes the Kullback--Leibler divergence~\cite{CoverThomas2006}.
Consequently, if $P_i\neq Q_i$, then
\begin{equation}
\mu_{F,i}<\ln 2<\mu_{I,i}.
\label{eq:mean_separation}
\end{equation}
\end{corollary}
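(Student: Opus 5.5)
The plan is to take expectations in the pointwise identity of Theorem~\ref{thm:metric_relation}. Under $\mathcal{A}_i$ we have $C_i(y)=\Gamma_i(y)+\ln 2$ with $\Gamma_i(y)=\ln\bigl(Q_i(y)/P_i(y)\bigr)$. Integrating against $P_i$ gives
\[
\mu_{F,i}=\ln 2+\mathbb{E}_{P_i}\!\left[\ln\frac{Q_i(Y)}{P_i(Y)}\right]=\ln 2-D(P_i\|Q_i),
\]
and integrating against $Q_i$ gives
\[
\mu_{I,i}=\ln 2+\mathbb{E}_{Q_i}\!\left[\ln\frac{Q_i(Y)}{P_i(Y)}\right]=\ln 2+D(Q_i\|P_i).
\]
These are exactly the two relations in \eqref{eq:mean_relation}.

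The step that needs care is checking that these expectations are well defined and finite. First, $Q_i=\tfrac12 W_i(\cdot|0)+\tfrac12 W_i(\cdot|1)\ge \tfrac12 P_i$, so $P_i\ll Q_i$ and $\Gamma_i\ge -\ln 2$. Equivalently, $C_i\ge 0$, so both expectations exist in $[0,\infty]$. Second, $Q_i\ll P_i$ requires $W_i(\cdot|1)\ll W_i(\cdot|0)$. This holds for the synthetic channels here because the underlying BI-AWGN channel has full-support Gaussian outputs, and the $f$/$g$ recursions preserve mutual absolute continuity. Hence both divergences are defined.

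Finiteness follows from a direct bound: $C_i(y)=\ln(1+e^{-\Lambda_i(y)})\le \ln 2+|\Lambda_i(y)|$. The synthetic LLR $\Lambda_i$ is built by finitely many $\boxplus$ and signed-sum operations from Gaussian LLRs, so it is integrable. I expect this integrability check to be the main, though mild, obstacle. If a fully general argument is preferred, I would simply allow the value $+\infty$ in the identities, which remain valid in the extended sense.

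For the strict separation \eqref{eq:mean_separation}, I would invoke Gibbs' inequality: $D(\cdot\|\cdot)\ge 0$, with equality if and only if the two laws coincide almost everywhere. If $P_i\ne Q_i$, then both $D(P_i\|Q_i)>0$ and $D(Q_i\|P_i)>0$. Substituting into \eqref{eq:mean_relation} yields $\mu_{F,i}<\ln 2<\mu_{I,i}$. A remark worth adding is that $P_i\ne Q_i$ is equivalent to $W_i(\cdot|0)\ne W_i(\cdot|1)$, that is, to the synthetic channel not being completely noisy.
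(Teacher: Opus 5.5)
Your proposal is correct and is essentially the paper's proof: you take expectations of the identity $C_i=\Gamma_i+\ln 2$ from Theorem~\ref{thm:metric_relation} under $P_i$ and $Q_i$ to obtain $\ln 2-D(P_i\|Q_i)$ and $\ln 2+D(Q_i\|P_i)$, then use nonnegativity of the Kullback--Leibler divergence, with equality only when the laws coincide. Your additional well-definedness checks are sound and fill in details the paper leaves implicit: $C_i\ge 0$; $P_i\ll Q_i$ because $Q_i\ge \frac{1}{2}P_i$, which also gives $D(P_i\|Q_i)\le\ln 2$; and integrability of $\Lambda_i$ for the information-side mean.
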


\begin{proof}
By Theorem~\ref{thm:metric_relation},
\[
C_i(Y)=\Gamma_i(Y)+\ln 2
      =\ln \frac{Q_i(Y)}{P_i(Y)}+\ln 2 .
\]
Taking expectation under $P_i$ gives
\begin{align*}
\mu_{F,i}
&= \mathbb{E}_{P_i}\!\left[\ln \frac{Q_i(Y)}{P_i(Y)}\right] + \ln 2 \\
&= -D(P_i\|Q_i)+\ln 2 .
\end{align*}
Similarly, taking expectation under $Q_i$ yields
\begin{align*}
\mu_{I,i}
&= \mathbb{E}_{Q_i}\!\left[\ln \frac{Q_i(Y)}{P_i(Y)}\right] + \ln 2 \\
&= D(Q_i\|P_i)+\ln 2 .
\end{align*}
Since the Kullback--Leibler divergence is nonnegative and equals zero if and only if the two distributions are identical, we have
\[
\mu_{F,i}\le \ln 2,
\qquad
\mu_{I,i}\ge \ln 2,
\]
with strict inequalities whenever $P_i\neq Q_i$.
This completes the proof.
\end{proof}

Corollary~1 shows that, under the SC-consistent model, the expected value of the recognition metric lies below $\ln 2$ for the frozen-position model and above $\ln 2$ for the information-position model.
Hence, $\ln 2$ lies between the two hypothesis-dependent means.

Under the neutral local rule, the averaged recognition metric is therefore compared with $\ln 2$.
The resulting position-wise and sequence-level error probabilities are analyzed next.

\section{Performance Analysis of Blind Recognition}

In this section, we analyze the recognition error probability under the ideal SC-consistent hypothesis-testing framework established in Section~III.
We first derive position-wise and sequence-level upper and lower bounds for the recognition problem under the neutral symmetric formulation, and then relate the resulting overlap coefficient to the classical Bhattacharyya parameter of polar synthetic channels.

\subsection{Hypothesis-Testing-Based Upper Bounds on Recognition Error Probability}

We first derive an upper bound on the conditional position-wise recognition error probability. 
As in Section~III, the analysis at position $i$ is conditioned on the SC-consistent event $\mathcal{A}_i$, under which the synthetic LLR follows the standard polar synthetic-channel representation. 
Under the neutral symmetric decision rule, the position-type decision at the $i$-th source index is made by comparing the accumulated log-likelihood ratio $G_i^{(M)}$ with zero.

To this end, define the Bhattacharyya coefficient between two probability measures $P$ and $Q$ as
\begin{equation}
B(P,Q)
\triangleq
\int \sqrt{p(y)q(y)}\,dy,
\label{eq:general_bhattacharyya}
\end{equation}
where $p$ and $q$ denote the corresponding densities with respect to a common dominating measure. 
In the following, when no confusion arises, the same symbols are used for probability measures and their corresponding densities.

The Bhattacharyya coefficient associated with the position-type discrimination problem is defined as
\begin{equation}
\Omega_i
\triangleq
B(P_i,Q_i)
=
\int \sqrt{P_i(y)Q_i(y)}\,dy .
\label{eq:bhat_coeff}
\end{equation}
This coefficient measures the overlap between the frozen-position distribution $P_i$ and the information-position distribution $Q_i$.

Conditioned on $\mathcal{A}_i$ and under the neutral symmetric decision rule, define the two one-sided error probabilities at position $i$ by
\begin{align}
P_{F\to I,i}^{(M)}
&\triangleq
\Pr\nolimits_{P_i^{\bullet M}}
\left\{G_i^{(M)}>0\right\}, \notag\\
P_{I\to F,i}^{(M)}
&\triangleq
\Pr\nolimits_{Q_i^{\bullet M}}
\left\{G_i^{(M)}\le 0\right\},
\label{eq:one_sided_errors}
\end{align}
where $P_i^{\bullet M}$ and $Q_i^{\bullet M}$ denote the $M$-fold product measures induced by $P_i$ and $Q_i$, respectively.

Let $\mathcal{E}_i$ denote the decision-error event of the local position-type test at source index $i$. 
Under the neutral symmetric local Bayesian formulation, this event can be written as
\begin{equation}
\mathcal{E}_i \triangleq
\bigl(\mathcal{H}_{F,i}\!\cap\!\{G_i^{(M)}>0\}\bigr)
\!\cup\!
\bigl(\mathcal{H}_{I,i}\!\cap\!\{G_i^{(M)}\le 0\}\bigr).
\label{eq:local_error_event}
\end{equation}
The first term corresponds to a frozen-to-information error, while the second term corresponds to an information-to-frozen error. 
Under the equal-prior and equal-cost setting, the corresponding local recognition error probability is defined as
\begin{equation}
P_{e,i}^{(M)}
\triangleq
\Pr(\mathcal{E}_i\mid\mathcal{A}_i)
=
\frac{1}{2}P_{F\to I,i}^{(M)}
+
\frac{1}{2}P_{I\to F,i}^{(M)} .
\label{eq:position_error_probability}
\end{equation}

\begin{theorem}
\label{thm:upper_position}
Under $\mathcal{A}_i$ and the neutral symmetric decision rule, the one-sided conditional error probabilities satisfy
\begin{equation}
P_{F\to I,i}^{(M)} \le \Omega_i^M,
\qquad
P_{I\to F,i}^{(M)} \le \Omega_i^M .
\label{eq:upper_position_one_sided}
\end{equation}
Consequently,
\begin{equation}
P_{e,i}^{(M)} \le \Omega_i^M .
\label{eq:upper_position_symmetric}
\end{equation}
\end{theorem}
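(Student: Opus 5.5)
The plan is to use the standard Chernoff bound with parameter $s=1/2$ on each one-sided error probability. We exploit the product structure of $P_i^{\bullet M}$ and $Q_i^{\bullet M}$ and the fact that $G_i^{(M)}$ is the sum of i.i.d.\ copies of $\Gamma_i(Y)=\ln(Q_i(Y)/P_i(Y))$.

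For the frozen-to-information error, the first step is to write
\[
P_{F\to I,i}^{(M)}=\Pr\nolimits_{P_i^{\bullet M}}\{G_i^{(M)}>0\}\le \mathbb{E}_{P_i^{\bullet M}}\!\left[e^{\frac{1}{2}G_i^{(M)}}\right].
\]
This holds because $\mathbf{1}\{G>0\}\le e^{G/2}$ pointwise, which is Markov's inequality applied to $e^{G/2}$. Independence of $Y_1,\ldots,Y_M$ then factorizes the expectation into $\bigl(\mathbb{E}_{P_i}[\sqrt{Q_i(Y)/P_i(Y)}]\bigr)^M$. Each factor equals $\int P_i(y)\sqrt{Q_i(y)/P_i(y)}\,dy=\int\sqrt{P_i(y)Q_i(y)}\,dy=\Omega_i$ by \eqref{eq:bhat_coeff}, which gives $\Omega_i^M$.

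For the information-to-frozen error, the argument is symmetric. We use $\mathbf{1}\{G\le 0\}\le e^{-G/2}$ and obtain
\[
P_{I\to F,i}^{(M)}\le \mathbb{E}_{Q_i^{\bullet M}}\!\left[e^{-\frac{1}{2}G_i^{(M)}}\right]=\Bigl(\mathbb{E}_{Q_i}\bigl[\sqrt{P_i(Y)/Q_i(Y)}\bigr]\Bigr)^M=\Omega_i^M.
\]
The bound on $P_{e,i}^{(M)}$ then follows immediately from the equal-weight average \eqref{eq:position_error_probability}, since $\tfrac12\Omega_i^M+\tfrac12\Omega_i^M=\Omega_i^M$.

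The only delicate point is the support of the two laws. The ratio $Q_i/P_i$ may be infinite where $P_i=0$, and $P_i/Q_i$ may vanish where $Q_i=0$. The first issue does not arise, because $Q_i=\tfrac12 W_i(\cdot|0)+\tfrac12 W_i(\cdot|1)\ge \tfrac12 P_i$, so $P_i\ll Q_i$. For the first bound we restrict all integrals to $\{P_i>0\}$, where the integrand $\sqrt{P_iQ_i}$ is unaffected. For the second bound, the event $\{G\le 0\}$ has $Q_i$-probability computed on $\{Q_i>0\}$, and there $P_i$ is finite. Both computations therefore stay within the common dominating measure and yield exactly $\Omega_i$.

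I expect this measure-theoretic bookkeeping to be the main, though mild, obstacle. The rest is the routine Chernoff/Bhattacharyya argument, and conditioning on $\mathcal{A}_i$ plays no role beyond guaranteeing that each $Y_m$ follows $P_i$ or $Q_i$, as established in Section~III.
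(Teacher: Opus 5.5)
Your proof is correct and is essentially the paper's argument: Markov's inequality applied to $e^{\pm G_i^{(M)}/2}$, factorization over the product measures, identification of each factor with $\Omega_i$, and averaging the two one-sided bounds. The paper merely carries a general Chernoff parameter $s>0$ before fixing $s=\tfrac12$. Your support bookkeeping is extra care the paper omits, but note one slip in its justification. What $Q_i\ge\tfrac12 P_i$ gives you is that $\Gamma_i\ge-\ln 2$, hence finite, $P_i$-a.s. The set $\{P_i=0<Q_i\}$, where $\Gamma_i=+\infty$, is handled by your restriction to $\{P_i>0\}$ and the convention $e^{-\infty}=0$, not by $P_i\ll Q_i$.
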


\begin{proof}
We first consider the frozen-to-information error event. 
For any $s>0$, Markov's inequality gives
\begin{align*}
P_{F\to I,i}^{(M)}
&=
\Pr\nolimits_{P_i^{\bullet M}}
\left\{G_i^{(M)}>0\right\} \\
&=
\Pr\nolimits_{P_i^{\bullet M}}
\left\{e^{sG_i^{(M)}}>1\right\} \\
&\le
\mathbb{E}_{P_i^{\bullet M}}
\left[e^{sG_i^{(M)}}\right].
\end{align*}
Since the $M$ observations are independent,
\[
\mathbb{E}_{P_i^{\bullet M}}
\left[e^{sG_i^{(M)}}\right]
=
\left(
\mathbb{E}_{P_i}
\left[e^{s\Gamma_i(Y)}\right]
\right)^M .
\]
Using the definition of $\Gamma_i(y)$, we obtain
\begin{align*}
\mathbb{E}_{P_i}
\left[e^{s\Gamma_i(Y)}\right]
&=
\int P_i(y)
\left(\frac{Q_i(y)}{P_i(y)}\right)^s dy \\
&=
\int P_i(y)^{1-s}Q_i(y)^s\,dy .
\end{align*}
Here, $s$ plays the role of a Chernoff parameter. 
Optimizing over $s$ could yield a tighter Chernoff-type bound. 
In this work, we choose $s=\frac{1}{2}$ so that the resulting bound is expressed in terms of the position-type Bhattacharyya coefficient $\Omega_i$ defined in \eqref{eq:bhat_coeff}. 
This choice is also consistent with the subsequent comparison between $\Omega_i$ and the classical Bhattacharyya parameter of the polar synthetic channel. 
With $s=\frac{1}{2}$, we have
\[
\mathbb{E}_{P_i}
\left[e^{\frac{1}{2}\Gamma_i(Y)}\right]
=
\int \sqrt{P_i(y)Q_i(y)}\,dy
=
\Omega_i ,
\]
and hence
\[
P_{F\to I,i}^{(M)} \le \Omega_i^M .
\]

Next, consider the information-to-frozen error event. 
Again by Markov's inequality, for any $s>0$,
\begin{align*}
P_{I\to F,i}^{(M)}
&=
\Pr\nolimits_{Q_i^{\bullet M}}
\left\{G_i^{(M)}\le 0\right\} \\
&=
\Pr\nolimits_{Q_i^{\bullet M}}
\left\{e^{-sG_i^{(M)}}\ge 1\right\} \\
&\le
\mathbb{E}_{Q_i^{\bullet M}}
\left[e^{-sG_i^{(M)}}\right].
\end{align*}
By independence,
\[
\mathbb{E}_{Q_i^{\bullet M}}
\left[e^{-sG_i^{(M)}}\right]
=
\left(
\mathbb{E}_{Q_i}
\left[e^{-s\Gamma_i(Y)}\right]
\right)^M .
\]
Again using the definition of $\Gamma_i(y)$, we have
\begin{align*}
\mathbb{E}_{Q_i}
\left[e^{-s\Gamma_i(Y)}\right]
&=
\int Q_i(y)
\left(\frac{P_i(y)}{Q_i(y)}\right)^s dy \\
&=
\int P_i(y)^sQ_i(y)^{1-s}\,dy .
\end{align*}
Using the same choice $s=\frac{1}{2}$ yields
\[
\mathbb{E}_{Q_i}
\left[e^{-\frac{1}{2}\Gamma_i(Y)}\right]
=
\int \sqrt{P_i(y)Q_i(y)}\,dy
=
\Omega_i ,
\]
and therefore
\[
P_{I\to F,i}^{(M)} \le \Omega_i^M .
\]

Finally, combining the two one-sided bounds with \eqref{eq:position_error_probability} yields
\[
P_{e,i}^{(M)}
=
\frac{1}{2}P_{F\to I,i}^{(M)}
+
\frac{1}{2}P_{I\to F,i}^{(M)}
\le
\Omega_i^M .
\]
This completes the proof.
\end{proof}

We now extend the position-wise bound to the sequence level under the same ideal SC-consistent model. In this extension, the recognition-consistent
prefix is used to condition on correct previous position-type decisions and to identify the first recognition error. The synthetic LLR at each position is
still generated under the correct-prefix condition, rather than from previous recognition decisions. Therefore, the resulting sequence-level bound
characterizes the intrinsic discrimination difficulty and does not account for the error-propagation effects of a practical SC-recursive implementation.

For this purpose, first note that, in the local binary test, $\mathcal{H}_{F,j}$ and $\mathcal{H}_{I,j}$ are mutually exclusive and exhaustive. 
Thus, the complement of the local error event $\mathcal{E}_j$ is the correct-decision event at position $j$, namely,
\begin{equation}
\mathcal{E}_j^c =
\bigl(\mathcal{H}_{F,j}\!\cap\!\{G_j^{(M)}\le 0\}\bigr)
\!\cup\!
\bigl(\mathcal{H}_{I,j}\!\cap\!\{G_j^{(M)}>0\}\bigr).
\end{equation}
Then, define the recognition-consistent prefix event as
\begin{equation}
\mathcal{R}_i
\triangleq
\bigcap_{j=0}^{i-1}\mathcal{E}_j^c,
\qquad i=0,1,\ldots,N-1,
\label{eq:recognition_prefix}
\end{equation}
where the empty intersection for $i=0$ is understood as the certain event.
Thus, $\mathcal{R}_i$ represents the event that all position-type decisions before index $i$ are correct in the ideal SC-consistent sequence-level analysis.
By construction of this model, conditioned on $\mathcal{R}_i$, the local test at index $i$ follows the same SC-consistent synthetic-channel model as that under $\mathcal{A}_i$.

Define
\begin{equation}
\mathcal{D}_i
\triangleq
\mathcal{R}_i\cap\mathcal{E}_i,
\qquad i=0,1,\ldots,N-1 .
\label{eq:first_error_event}
\end{equation}
Then $\mathcal{D}_i$ denotes the event that the first position-type recognition error occurs at index $i$. 
Hence, the overall sequence error event can be written as
\begin{equation}
\mathcal{E}_{\mathrm{tot}}
=
\bigcup_{i=0}^{N-1}\mathcal{D}_i .
\label{eq:overall_error_event}
\end{equation}
Since the events $\mathcal{D}_0,\mathcal{D}_1,\ldots,\mathcal{D}_{N-1}$ are mutually disjoint, we have the following proposition.

\begin{proposition}
\label{prop:overall_upper}
Under the neutral symmetric ideal SC-consistent sequence-level model, the overall recognition error probability satisfies
\begin{equation}
\Pr(\mathcal{E}_{\mathrm{tot}})
\le
\sum_{i=0}^{N-1}\Omega_i^M .
\label{eq:overall_upper_bound}
\end{equation}
\end{proposition}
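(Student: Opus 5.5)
The plan is to use the first-error decomposition \eqref{eq:overall_error_event} together with the disjointness of the events $\mathcal{D}_0,\ldots,\mathcal{D}_{N-1}$, and then bound each first-error probability by the position-wise bound of Theorem~\ref{thm:upper_position}.

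First, disjointness gives the exact identity
\[
\Pr(\mathcal{E}_{\mathrm{tot}})=\sum_{i=0}^{N-1}\Pr(\mathcal{D}_i).
\]
It is worth checking disjointness explicitly. For $j<i$, the event $\mathcal{D}_i\subseteq\mathcal{R}_i\subseteq\mathcal{E}_j^c$, while $\mathcal{D}_j\subseteq\mathcal{E}_j$. Hence $\mathcal{D}_i\cap\mathcal{D}_j=\emptyset$.

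Second, each term is bounded as
\[
\Pr(\mathcal{D}_i)=\Pr(\mathcal{R}_i\cap\mathcal{E}_i)=\Pr(\mathcal{R}_i)\Pr(\mathcal{E}_i\mid\mathcal{R}_i)\le\Pr(\mathcal{E}_i\mid\mathcal{R}_i).
\]
If $\Pr(\mathcal{R}_i)=0$, then $\Pr(\mathcal{D}_i)=0$ and the term trivially satisfies the bound.

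Third, I need to identify $\Pr(\mathcal{E}_i\mid\mathcal{R}_i)$ with $P_{e,i}^{(M)}=\Pr(\mathcal{E}_i\mid\mathcal{A}_i)$. This is the step that carries the modeling content. The paper states that, by construction of the ideal SC-consistent sequence-level model, conditioning on $\mathcal{R}_i$ leaves the local test at index $i$ with the same synthetic-channel law as under $\mathcal{A}_i$. Concretely, the synthetic observations $Y_1,\ldots,Y_M$ at index $i$ are generated under the correct-prefix condition, not from earlier recognition decisions. So given the hypothesis at $i$, they are distributed as $P_i^{\bullet M}$ or $Q_i^{\bullet M}$, and the hypothesis at $i$ keeps equal local priors. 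The conditional error probability therefore equals $\tfrac12 P_{F\to I,i}^{(M)}+\tfrac12 P_{I\to F,i}^{(M)}$. Theorem~\ref{thm:upper_position} then bounds it by $\Omega_i^M$.

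I expect this third step to be the main obstacle. Conditioning on $\mathcal{R}_i$ involves the statistics $G_j^{(M)}$ for $j<i$, which may be statistically dependent on the index-$i$ observations. To handle this, I would invoke the model assumption directly. Alternatively, I would note that, under the model, the one-sided conditional errors given any realization of the prefix events still satisfy the Chernoff argument of Theorem~\ref{thm:upper_position}, since that argument uses only the conditional product law of the $Y_m$. Either way, each one-sided error is at most $\Omega_i^M$, so their average is too.

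Summing over $i$ then yields \eqref{eq:overall_upper_bound}.
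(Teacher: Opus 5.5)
Your proposal is correct and follows the paper's proof essentially step for step. Both use the disjoint first-error decomposition $\Pr(\mathcal{E}_{\mathrm{tot}})=\sum_i\Pr(\mathcal{D}_i)$, then $\Pr(\mathcal{D}_i)=\Pr(\mathcal{R}_i)\Pr(\mathcal{E}_i\mid\mathcal{R}_i)\le\Pr(\mathcal{E}_i\mid\mathcal{R}_i)$, and then the by-construction identification of the local test given $\mathcal{R}_i$ with the one under $\mathcal{A}_i$, so that Theorem~\ref{thm:upper_position} applies. Your explicit disjointness check, the $\Pr(\mathcal{R}_i)=0$ case, and the remark that the two one-sided bounds make the conclusion independent of the conditional priors are small refinements of points the paper only asserts.
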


\begin{proof}
Since the events $\mathcal{D}_i$ are mutually disjoint,
\[
\Pr(\mathcal{E}_{\mathrm{tot}})
=
\sum_{i=0}^{N-1}\Pr(\mathcal{D}_i).
\]
By the definition of $\mathcal{D}_i$,
\[
\Pr(\mathcal{D}_i)
=
\Pr(\mathcal{R}_i)\,
\Pr(\mathcal{E}_i\mid\mathcal{R}_i).
\]
Therefore,
\[
\Pr(\mathcal{E}_{\mathrm{tot}})
=
\sum_{i=0}^{N-1}
\Pr(\mathcal{R}_i)\,
\Pr(\mathcal{E}_i\mid\mathcal{R}_i).
\]

By construction, conditioned on \(\mathcal{R}_i\), the \(i\)-th local test is in the same neutral symmetric correct-prefix setting as the position-wise
test conditioned on \(\mathcal{A}_i\).
Therefore, the corresponding conditional error probability is bounded by Theorem~\ref{thm:upper_position}, i.e.,
\[
\Pr(\mathcal{E}_i\mid\mathcal{R}_i)
\le
\Omega_i^M .
\]
Since $\Pr(\mathcal{R}_i)\le 1$, it follows that
\[
\Pr(\mathcal{E}_{\mathrm{tot}})
\le
\sum_{i=0}^{N-1}\Omega_i^M .
\]
This completes the proof.
\end{proof}

The above proposition shows that each term \(\Omega_i^M\) in the sequence-level upper bound decays exponentially with \(M\) whenever
\(\Omega_i<1\). Thus, the bound is mainly controlled by the positions whose frozen- and information-position distributions have the largest overlap. 
For a practical SC-recursive procedure, an early erroneous position-type decision may further affect later positions. Section~V examines this effect.

\subsection{Hypothesis-Testing-Based Lower Bounds on Recognition Error Probability}

We next derive lower bounds under the same ideal SC-consistent hypothesis-testing model. The analysis is restricted to the neutral symmetric
case, where the two position-type hypotheses have equal local priors and equal misclassification costs. Since the neutral rule in Section~III is optimal in
this case, the bounds quantify the intrinsic difficulty of distinguishing frozen and information positions.

\begin{theorem}
\label{thm:lower_position}
Under \(\mathcal{A}_i\) and the neutral symmetric formulation, the conditional position-wise recognition error probability at position \(i\) satisfies

\begin{equation}
P_{e,i}^{(M)}
=
\frac{1}{2}
\left(
1-\left\|P_i^{\bullet M}-Q_i^{\bullet M}\right\|_{\mathrm{TV}}
\right),
\label{eq:lower_position_exact}
\end{equation}
where $\|\cdot\|_{\mathrm{TV}}$ denotes the total variation distance~\cite{VanTrees2001}.
Moreover,
\begin{equation}
P_{e,i}^{(M)}
\ge
\frac{1}{2}
\left(
1-\sqrt{1-\Omega_i^{2M}}
\right).
\label{eq:lower_position_bound}
\end{equation}
\end{theorem}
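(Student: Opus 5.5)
The proof has two parts: an exact expression for the error of the neutral rule, and then a lower bound on that expression via the Bhattacharyya coefficient.

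\emph{Exact expression.} Let $p$ and $q$ denote the densities of $P_i^{\bullet M}$ and $Q_i^{\bullet M}$ on $\mathbf{y}=(y_1,\ldots,y_M)$. The rule in \eqref{eq:one_sided_errors} decides $\mathcal{H}_{I,i}$ exactly on the set $\mathcal{S}=\{\mathbf{y}:q(\mathbf{y})>p(\mathbf{y})\}$, because $G_i^{(M)}=\ln(q/p)$. Substituting into \eqref{eq:position_error_probability} gives
\[
P_{e,i}^{(M)}=\tfrac12\Big(\int_{\mathcal{S}}p+\int_{\mathcal{S}^c}q\Big)=\tfrac12\int\min(p,q)\,d\mathbf{y}.
\]
The identity $\min(a,b)=\tfrac12(a+b-|a-b|)$ then yields
\[
\int\min(p,q)=1-\tfrac12\int|p-q|=1-\|P_i^{\bullet M}-Q_i^{\bullet M}\|_{\mathrm{TV}},
\]
which is \eqref{eq:lower_position_exact}. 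This uses the convention $\|P-Q\|_{\mathrm{TV}}=\tfrac12\int|p-q|$, which is the normalization that makes the constants in \eqref{eq:lower_position_exact} match. Points where $p=q$ are harmless, since $\min(p,q)$ takes the same value whichever decision is made there. I will note that this expression is the Bayes-optimal error under equal priors and costs, so it is an intrinsic quantity of the test.

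\emph{Lower bound.} The product densities factor, so the Bhattacharyya coefficient tensorizes: by Fubini,
\[
B(P_i^{\bullet M},Q_i^{\bullet M})=\prod_{m=1}^M\int\sqrt{P_i(y_m)Q_i(y_m)}\,dy_m=\Omega_i^M.
\]
Next I apply the Le Cam-type inequality $B(P,Q)^2\le 1-\|P-Q\|_{\mathrm{TV}}^2$. Writing $\sqrt{pq}=\sqrt{\min(p,q)\max(p,q)}$ and applying Cauchy--Schwarz gives
\[
\Big(\int\sqrt{pq}\Big)^2\le\int\min(p,q)\int\max(p,q).
\]
With $\int\min=1-T$ and $\int\max=1+T$, where $T$ is the TV distance, this becomes $B^2\le(1-T)(1+T)=1-T^2$. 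Hence $T\le\sqrt{1-B^2}=\sqrt{1-\Omega_i^{2M}}$, and substituting into \eqref{eq:lower_position_exact} gives \eqref{eq:lower_position_bound}.

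I expect the main obstacle to be bookkeeping rather than anything conceptual. One point is keeping the TV normalization consistent with \eqref{eq:lower_position_exact}. The other is noting that conditioning on $\mathcal{A}_i$ is what makes $P_i$ and $Q_i$ the true laws of the i.i.d.\ observations, which is what justifies the product structure.

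An alternative route would apply Cauchy--Schwarz directly to $\int\min(p,q)$ to get the tighter $\tfrac12\int\min\ge \tfrac14 B^2$-type bound. I will stay with the Cauchy--Schwarz argument above, because it produces exactly the stated form $\tfrac12\bigl(1-\sqrt{1-\Omega_i^{2M}}\bigr)$.
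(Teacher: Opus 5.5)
Your proof is correct and follows the same route as the paper: first the exact equal-prior error in terms of total variation, then $\|P-Q\|_{\mathrm{TV}}\le\sqrt{1-B(P,Q)^2}$ together with tensorization $B(P_i^{\bullet M},Q_i^{\bullet M})=\Omega_i^M$. The only difference is that you derive both standard facts (via $\int\min(p,q)$ and Cauchy--Schwarz), whereas the paper simply cites them. One small correction to your closing aside: the direct bound $\tfrac14 B^2$ is weaker, not tighter, because $\sqrt{1-x}\le 1-x/2$ gives $\tfrac12\bigl(1-\sqrt{1-x}\bigr)\ge x/4$ for $x=\Omega_i^{2M}\in[0,1]$.
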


\begin{proof}
Under the neutral equal-prior and equal-cost formulation, the optimal error probability for discriminating between $P_i^{\bullet M}$ and $Q_i^{\bullet M}$ is given by \eqref{eq:lower_position_exact}.
Next, using the standard inequality
\[
\|P-Q\|_{\mathrm{TV}}
\le
\sqrt{1-B(P,Q)^2},
\]
together with the multiplicativity of the Bhattacharyya coefficient for product measures, we obtain
\begin{align*}
\left\|P_i^{\bullet M}-Q_i^{\bullet M}\right\|_{\mathrm{TV}}
&\le
\sqrt{
1-
B\!\left(P_i^{\bullet M},Q_i^{\bullet M}\right)^2
} \\
&=
\sqrt{1-\Omega_i^{2M}} .
\end{align*}
Substituting this bound into \eqref{eq:lower_position_exact} yields \eqref{eq:lower_position_bound}.
This completes the proof.
\end{proof}

We next extend the lower-bound analysis to the sequence level. Under the same ideal SC-consistent model, conditioned on \(\mathcal{R}_i\), the local test at
position \(i\) follows the neutral symmetric correct-prefix test considered in Theorem~3. This allows the position-wise lower bound to be used in the
following product decomposition.

Since
\[
\mathcal{E}_{\mathrm{tot}}^c
=
\bigcap_{i=0}^{N-1}\mathcal{E}_i^c ,
\]
the chain rule gives
\begin{equation}
\Pr(\mathcal{E}_{\mathrm{tot}}^c)
=
\prod_{i=0}^{N-1}
\Pr(\mathcal{E}_i^c\mid\mathcal{R}_i).
\label{eq:success_chain_rule}
\end{equation}
Since
\[
\Pr(\mathcal{E}_i^c\mid\mathcal{R}_i)
=
1-\Pr(\mathcal{E}_i\mid\mathcal{R}_i),
\]
we obtain the product representation
\begin{equation}
\Pr(\mathcal{E}_{\mathrm{tot}})
=
1-
\prod_{i=0}^{N-1}
\bigl(1-\Pr(\mathcal{E}_i\mid\mathcal{R}_i)\bigr).
\label{eq:overall_error_exact_product}
\end{equation}

\begin{proposition}
\label{prop:overall_lower}
Under the neutral symmetric ideal SC-consistent sequence-level formulation, the overall recognition error probability satisfies
\begin{equation}
\Pr(\mathcal{E}_{\mathrm{tot}})
\ge
1-
\prod_{i=0}^{N-1}
\left(
1-
\frac{1}{2}
\left(
1-\sqrt{1-\Omega_i^{2M}}
\right)
\right).
\label{eq:overall_lower_bound}
\end{equation}
\end{proposition}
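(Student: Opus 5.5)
The plan is to start from the exact product representation \eqref{eq:overall_error_exact_product}, which already writes $\Pr(\mathcal{E}_{\mathrm{tot}})$ as $1-\prod_{i}\bigl(1-\Pr(\mathcal{E}_i\mid\mathcal{R}_i)\bigr)$. The right-hand side is nondecreasing in each conditional error probability $\Pr(\mathcal{E}_i\mid\mathcal{R}_i)$. It therefore suffices to lower-bound each factor's error term individually and substitute.

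First, we invoke the structural property stated before the proposition: conditioned on $\mathcal{R}_i$, the local test at index $i$ is the same neutral symmetric correct-prefix test as under $\mathcal{A}_i$. Consequently $\Pr(\mathcal{E}_i\mid\mathcal{R}_i)=P_{e,i}^{(M)}$. Theorem~\ref{thm:lower_position} then gives
\[
\Pr(\mathcal{E}_i\mid\mathcal{R}_i)\ge \ell_i \triangleq \tfrac{1}{2}\bigl(1-\sqrt{1-\Omega_i^{2M}}\bigr).
\]
Equivalently, $1-\Pr(\mathcal{E}_i\mid\mathcal{R}_i)\le 1-\ell_i$.

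Second, we check that all factors are nonnegative so the termwise inequalities may be multiplied. Since $0\le\Omega_i\le 1$ by Cauchy--Schwarz, we have $\ell_i\in[0,1/2]$, so $1-\ell_i\in[1/2,1]$. Each actual factor $1-\Pr(\mathcal{E}_i\mid\mathcal{R}_i)$ is a probability and hence lies in $[0,1]$. For nonnegative $a_i\le b_i$ we have $\prod a_i\le\prod b_i$, which yields
\[
\prod_{i}\bigl(1-\Pr(\mathcal{E}_i\mid\mathcal{R}_i)\bigr)\le\prod_i(1-\ell_i).
\]
Subtracting from $1$ gives \eqref{eq:overall_lower_bound}.

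The main obstacle is not the algebra but the identification $\Pr(\mathcal{E}_i\mid\mathcal{R}_i)=P_{e,i}^{(M)}$. Conditioning on $\mathcal{R}_i$ could in principle distort the law of $Y_1,\ldots,Y_M$ at position $i$. I would handle this exactly as in Proposition~\ref{prop:overall_upper}, by appealing to the construction of the ideal sequence-level model. In that model the synthetic observations at position $i$ are generated under the correct-prefix condition, independently of earlier recognition outcomes, and the equal local priors persist. Under these assumptions the conditional test is precisely the neutral binary test between $P_i^{\bullet M}$ and $Q_i^{\bullet M}$. Both the Bayes-error identity \eqref{eq:lower_position_exact} and the bound \eqref{eq:lower_position_bound} then apply verbatim. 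A minor point is the case $\Pr(\mathcal{R}_i)=0$, where the conditional probability is undefined. I would note that the chain rule \eqref{eq:success_chain_rule} then makes the success probability zero, so the bound holds trivially.
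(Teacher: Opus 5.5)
Your proposal is correct and follows essentially the same route as the paper: you identify $\Pr(\mathcal{E}_i\mid\mathcal{R}_i)$ with $P_{e,i}^{(M)}$ through the construction of the ideal model, apply Theorem~\ref{thm:lower_position}, and use that $1-\prod_i(1-x_i)$ is nondecreasing in each $x_i\in[0,1]$, which you phrase equivalently as multiplying termwise inequalities between nonnegative factors. Your extra remark on the degenerate case $\Pr(\mathcal{R}_i)=0$ is a harmless refinement that the paper omits.
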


\begin{proof}

Given \(\mathcal{R}_i\), the local test at position \(i\) has the same neutral symmetric correct-prefix model as the position-wise test considered in Theorem~\ref{thm:lower_position}. Hence,
\[
\Pr(\mathcal{E}_i\mid\mathcal{R}_i)
=
P_{e,i}^{(M)} .
\]
By Theorem~\ref{thm:lower_position}, for each $i$,
\[
P_{e,i}^{(M)}
\ge
\frac{1}{2}
\left(
1-\sqrt{1-\Omega_i^{2M}}
\right).
\]
Therefore,
\[
\Pr(\mathcal{E}_i\mid\mathcal{R}_i)
\ge
\frac{1}{2}
\left(
1-\sqrt{1-\Omega_i^{2M}}
\right).
\]

Now consider the function
\[
f(x_0,\ldots,x_{N-1})
=
1-\prod_{i=0}^{N-1}(1-x_i),
\qquad
0\le x_i\le 1.
\]
This function is monotonically increasing with respect to each argument $x_i$.
Therefore, by \eqref{eq:overall_error_exact_product},
\begin{align*}
\Pr(\mathcal{E}_{\mathrm{tot}})
&=
1-
\prod_{i=0}^{N-1}
\bigl(1-\Pr(\mathcal{E}_i\mid\mathcal{R}_i)\bigr) \\
&\ge
1-
\prod_{i=0}^{N-1}
\left(
1-
\frac{1}{2}
\left(
1-\sqrt{1-\Omega_i^{2M}}
\right)
\right).
\end{align*}
This completes the proof.
\end{proof}

Thus, the lower bound characterizes the intrinsic difficulty of frozen/information
position discrimination under the neutral symmetric ideal SC-consistent model.

\subsection{Connection to Classical Bhattacharyya Parameters}

We now relate the position-type overlap coefficient $\Omega_i$ to the classical Bhattacharyya parameter of the $i$-th synthetic bit-channel.
Specifically, define
\begin{equation}
\begin{split}
Z_i
&\triangleq
Z(W_i) \\
&\triangleq
B\!\bigl(W_i(\cdot|0),W_i(\cdot|1)\bigr) \\
&=
\int \sqrt{W_i(y|0)W_i(y|1)}\,dy .
\end{split}
\label{eq:classical_bhattacharyya}
\end{equation}
This connection is important because $Z_i$ is the standard reliability measure used in polar-code analysis, whereas $\Omega_i$ is the key quantity governing blind recognition in the present SC-consistent framework.

Let $Y\sim P_i$ and define
\[
X_i(Y)
\triangleq
\sqrt{\frac{W_i(Y|1)}{W_i(Y|0)}} .
\]
Then
\[
\mathbb{E}_{P_i}[X_i]=Z_i,
\qquad
\mathbb{E}_{P_i}[X_i^2]=1 .
\]
On the other hand, from the definitions of $P_i$ and $Q_i$,
\[
\Omega_i
=
\mathbb{E}_{P_i}\!\left[\phi(X_i)\right],
\qquad
\phi(x)\triangleq\sqrt{\frac{1+x^2}{2}},
\quad x\ge 0 .
\]

The following theorem provides lower and upper bounds on $\Omega_i$ in terms of $Z_i$ only.

\begin{theorem}
\label{thm:Omega_Z_relation}
For each source position $i$,
\begin{equation}
\sqrt{\frac{1+Z_i^2}{2}}
\le
\Omega_i
\le
\psi(Z_i),
\label{eq:Omega_Z_relation}
\end{equation}
where
\begin{equation}
\psi(z)
\triangleq
\frac{1-z^2}{\sqrt{2}}
+
z\sqrt{\frac{1+z^2}{2}},
\qquad 0\le z\le 1 .
\label{eq:psi_definition}
\end{equation}
\end{theorem}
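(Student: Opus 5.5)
The plan is to work entirely with the representation $\Omega_i=\mathbb{E}_{P_i}[\phi(X_i)]$. Here $X_i\ge 0$ satisfies $\mathbb{E}_{P_i}[X_i]=Z_i$ and $\mathbb{E}_{P_i}[X_i^2]=1$. The second moment identity holds because $\int W_i(y|1)\,dy=1$, and it should be checked on the set where $W_i(y|0)>0$; if $W_i(\cdot|1)$ has mass where $W_i(\cdot|0)=0$, one notes that this set contributes nothing to $\Omega_i$, so $\mathbb{E}[X_i^2]\le 1$. Both bounds then become statements about a nonnegative random variable with a given mean and bounded second moment.

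\emph{Lower bound.} The function $\phi(x)=\sqrt{(1+x^2)/2}$ is convex on $x\ge 0$, since it is $2^{-1/2}$ times the Euclidean norm of $(1,x)$. Jensen's inequality therefore gives $\Omega_i=\mathbb{E}[\phi(X_i)]\ge\phi(\mathbb{E}X_i)=\sqrt{(1+Z_i^2)/2}$, which is the left inequality. This step is immediate.

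\emph{Upper bound.} Since $\phi$ is convex, the upper bound requires a chord or tangent-type majorant that uses both moment constraints. I would look for a quadratic-plus-linear majorant $\phi(x)\le a+bx+cx^2$ for all $x\ge 0$, with coefficients depending on $z=Z_i$. Taking expectations then gives $\Omega_i\le a+bz+c$. The extremal distribution suggested by the form of $\psi$ is supported on $\{0,x^*\}$:
\begin{itemize}
\item To match the two moments we need $p\,x^*=z$ and $p\,x^{*2}=1$, which gives $x^*=1/z$ and $p=z^2$.
\item The resulting value is $\mathbb{E}\phi=(1-z^2)\phi(0)+z^2\phi(1/z)=(1-z^2)/\sqrt2+z\sqrt{(1+z^2)/2}$.
\end{itemize}
This is exactly $\psi(z)$. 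So the candidate majorant is the quadratic $q(x)=a+bx+cx^2$ that touches $\phi$ at $x=0$ and is tangent to $\phi$ at $x=1/z$ (for $z<1$, touching at $0$ without tangency since $0$ is a boundary point). The key facts to verify are:
\begin{itemize}
\item The conditions $q(0)=\phi(0)$, $q(1/z)=\phi(1/z)$ and $q'(1/z)=\phi'(1/z)$ determine $a,b,c$.
\item $q\ge\phi$ on $[0,\infty)$.
\item $\mathbb{E}[q(X_i)]=a+bz+c\,\mathbb{E}X_i^2\le a+bz+c=\psi(z)$, using $c\ge 0$ together with $\mathbb{E}X_i^2\le 1$.
\end{itemize}
Evaluating $a+bz+c$ should reproduce $\psi(z)$ by construction, since the two-point law attains equality.

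\emph{Main obstacle.} The hard step is verifying $q(x)\ge\phi(x)$ for all $x\ge0$, together with $c\ge0$. My first approach is to write $h(x)=q(x)-\phi(x)$ and use the fact that $\phi''(x)=2^{-1/2}(1+x^2)^{-3/2}$ is decreasing. Then $h''=2c-\phi''$ is increasing, so $h$ is concave and then convex. Given $h(0)=0$, $h(1/z)=h'(1/z)=0$ and the concave-then-convex shape, $h$ can have only these zeros, and the sign pattern forces $h\ge0$ provided $h'(0)\ge 0$ or $h$ is positive just to the right of $0$; this should be checked directly. If that is awkward, a fallback is to square and clear denominators: both sides of $q\ge\phi$ are nonnegative, so the inequality reduces to nonnegativity of a quartic polynomial with a double root at $1/z$, and that quartic can be factored explicitly. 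The edge cases $z=0$ (where the bound is $1/\sqrt2$, attained as $x^*\to\infty$) and $z=1$ (where $X_i\equiv1$ and $\Omega_i=1$) are handled by continuity or directly.
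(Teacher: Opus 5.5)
Your proposal is correct and follows essentially the paper's route: Jensen's inequality for the lower bound, and for the upper bound the same quadratic majorant $q_z$ that matches $\phi$ at $0$ and is tangent to $\phi$ at $1/z$, whose expectation under the two moment constraints equals $\psi(Z_i)$. The differences are minor: the paper certifies $\phi\le q_z$ via the Hermite remainder $\phi(x)-q_z(x)=\tfrac{\phi'''(\xi_x)}{6}\,x(x-1/z)^2$ with $\phi'''\le 0$, which is the same sign fact as your ``$\phi''$ decreasing'' concave-then-convex argument (closed by Rolle, since strict convexity of $h'$ allows only two critical points), and your use of $\mathbb{E}X_i^2\le 1$ with $c\ge 0$ slightly generalizes the paper's equality $\mathbb{E}_{P_i}[X_i^2]=1$, which holds for the AWGN synthetic channels considered.
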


\begin{proof}
We first prove the lower bound.
Since $\phi(x)=\sqrt{(1+x^2)/2}$ is convex on $[0,\infty)$, Jensen's inequality gives
\[
\Omega_i
=
\mathbb{E}_{P_i}\!\left[\phi(X_i)\right]
\ge
\phi\!\left(\mathbb{E}_{P_i}[X_i]\right)
=
\phi(Z_i)
=
\sqrt{\frac{1+Z_i^2}{2}} .
\]

We next prove the upper bound.
We first consider the case $Z_i=0$.
Since $X_i\ge 0$ and $\mathbb{E}_{P_i}[X_i]=Z_i=0$, we have $X_i=0$ $P_i$-almost surely.
Therefore,
\[
\Omega_i
=
\mathbb{E}_{P_i}\!\left[\phi(X_i)\right]
=
\phi(0)
=
\frac{1}{\sqrt{2}}
=
\psi(0),
\]
which proves the desired upper bound for $Z_i=0$.

It remains to consider the case $Z_i>0$.
Fix $z\in(0,1]$ and define the quadratic polynomial
\[
q_z(x)=a_zx^2+b_zx+\frac{1}{\sqrt{2}},
\]
where
\begin{align*}
a_z
&=
\frac{z^2}{\sqrt{2}}
\left(1-\frac{z}{\sqrt{1+z^2}}\right), \\
b_z
&=
\frac{1}{\sqrt{2}}
\left(\frac{2z^2+1}{\sqrt{1+z^2}}-2z\right).
\end{align*}
These coefficients are chosen so that
\begin{align*}
q_z(0) &= \phi(0),\\
q_z(1/z) &= \phi(1/z),\\
q_z'(1/z) &= \phi'(1/z).
\end{align*}

Now, the third derivative of $\phi(x)$ is
\[
\phi'''(x)
=
-\frac{3x}{\sqrt{2}(1+x^2)^{5/2}},
\qquad x\ge 0,
\]
and hence $\phi'''(x)\le 0$ for all $x\ge 0$.
By the Hermite interpolation remainder formula, for every $x\ge 0$, there exists some $\xi_x\in(0,\max\{x,1/z\})$ such that
\[
\phi(x)-q_z(x)
=
\frac{\phi'''(\xi_x)}{6}\,
x\left(x-\frac{1}{z}\right)^2 .
\]
Since $x\ge 0$, $\left(x-\frac{1}{z}\right)^2\ge 0$, and $\phi'''(\xi_x)\le 0$, it follows that
\[
\phi(x)-q_z(x)\le 0,
\qquad \forall x\ge 0 .
\]
Therefore, $\phi(x)\le q_z(x)$ for all $x\ge 0$.

Applying this inequality to $X_i$ with $z=Z_i>0$ and taking expectation under $P_i$, we obtain
\begin{align*}
\Omega_i
&=
\mathbb{E}_{P_i}\!\left[\phi(X_i)\right] \\
&\le
\mathbb{E}_{P_i}\!\left[q_{Z_i}(X_i)\right] \\
&=
a_{Z_i}\mathbb{E}_{P_i}[X_i^2]
+
b_{Z_i}\mathbb{E}_{P_i}[X_i]
+
\frac{1}{\sqrt{2}} \\
&=
a_{Z_i}+b_{Z_i}Z_i+\frac{1}{\sqrt{2}} \\
&=
\psi(Z_i).
\end{align*}
This completes the proof.
\end{proof}

By combining Theorem~\ref{thm:Omega_Z_relation} with the upper and lower bounds derived in Sections~IV-A and IV-B, we obtain recognition bounds expressed solely in terms of the classical Bhattacharyya parameter $Z_i$.

\begin{corollary}
\label{cor:Z_only_bounds}
Under the neutral symmetric ideal SC-consistent model, for each source position \(i\), the position-wise recognition error probability satisfies
\begin{equation}
\frac{1}{2}
\left(
1-
\sqrt{
1-
\left(\frac{1+Z_i^2}{2}\right)^M
}
\right)
\le
P_{e,i}^{(M)}
\le
\psi(Z_i)^M .
\label{eq:Z_only_position_bound}
\end{equation}
Moreover, the sequence-level recognition error probability satisfies the following upper and lower bounds:
\begin{equation}
\Pr(\mathcal{E}_{\mathrm{tot}})
\le
\sum_{i=0}^{N-1}\psi(Z_i)^M ,
\label{eq:Z_only_overall_upper}
\end{equation}
and
\begin{equation}
\Pr(\mathcal{E}_{\mathrm{tot}})
\ge
1-
\prod_{i=0}^{N-1}
\Bigl(
1-\frac{1}{2}
\bigl(
1-\sqrt{1-\bigl(\tfrac{1+Z_i^2}{2}\bigr)^M}
\bigr)
\Bigr).
\label{eq:Z_only_overall_lower}
\end{equation}
\end{corollary}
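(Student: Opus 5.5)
The plan is to substitute the two-sided estimate of Theorem~\ref{thm:Omega_Z_relation} into the $\Omega_i$-based bounds of Theorem~\ref{thm:upper_position}, Theorem~\ref{thm:lower_position}, Proposition~\ref{prop:overall_upper} and Proposition~\ref{prop:overall_lower}. The only work is to check that each of those bounds is monotone in $\Omega_i$ in the correct direction. Here $\Omega_i\in[0,1]$, being a Bhattacharyya coefficient (Cauchy--Schwarz). Also $\sqrt{(1+Z_i^2)/2}\ge 1/\sqrt{2}>0$, so all the quantities raised to powers are nonnegative.

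For the position-wise upper bound, the map $t\mapsto t^M$ is nondecreasing on $[0,\infty)$. Combining $\Omega_i\le\psi(Z_i)$ from Theorem~\ref{thm:Omega_Z_relation} with $P_{e,i}^{(M)}\le\Omega_i^M$ from Theorem~\ref{thm:upper_position} therefore gives $P_{e,i}^{(M)}\le\psi(Z_i)^M$. Summing over $i$ and invoking Proposition~\ref{prop:overall_upper} yields \eqref{eq:Z_only_overall_upper}.

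For the lower bounds, consider
\[
h(t)=\tfrac12\bigl(1-\sqrt{1-t^{2M}}\bigr),
\]
which is nondecreasing for $t\in[0,1]$. The lower bound of Theorem~\ref{thm:Omega_Z_relation} gives $\Omega_i^{2M}\ge\bigl((1+Z_i^2)/2\bigr)^{M}$. Hence
\[
h(\Omega_i)\ge\tfrac12\Bigl(1-\sqrt{1-\bigl(\tfrac{1+Z_i^2}{2}\bigr)^M}\Bigr),
\]
and this expression is well defined because $(1+Z_i^2)/2\le 1$. Together with \eqref{eq:lower_position_bound}, this proves the left inequality in \eqref{eq:Z_only_position_bound}.

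For \eqref{eq:Z_only_overall_lower}, I will reuse the argument in the proof of Proposition~\ref{prop:overall_lower}. By the product representation \eqref{eq:overall_error_exact_product}, and since $f(x_0,\ldots,x_{N-1})=1-\prod_i(1-x_i)$ is increasing in each argument on $[0,1]^N$, I can replace each $\Pr(\mathcal{E}_i\mid\mathcal{R}_i)=P_{e,i}^{(M)}$ by the smaller $Z_i$-only lower bound just obtained. All of these bounds lie in $[0,1/2]$, so they stay in the domain of $f$.

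No step is a real obstacle. The one point needing care is the monotonicity of $h$ and of $t\mapsto t^M$, together with checking that all arguments remain in $[0,1]$. That check is exactly what guarantees the direction of each inequality is preserved when the two-sided estimate of Theorem~\ref{thm:Omega_Z_relation} is substituted.
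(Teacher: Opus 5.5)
Your proposal is correct and follows the paper's proof: it combines Theorems~\ref{thm:upper_position} and~\ref{thm:lower_position} with the two-sided estimate of Theorem~\ref{thm:Omega_Z_relation}, then passes to the sequence level through Propositions~\ref{prop:overall_upper} and~\ref{prop:overall_lower}. Your explicit checks that $t\mapsto t^M$ and $h$ are monotone and that all arguments stay in $[0,1]$ supply the justifications the paper leaves implicit, and re-running the product/monotonicity argument of Proposition~\ref{prop:overall_lower} with the $Z_i$-only bounds is equivalent to the paper's appeal to that proposition.
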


\begin{proof}
From Theorem~\ref{thm:upper_position} and the upper bound in Theorem~\ref{thm:Omega_Z_relation},
\[
P_{e,i}^{(M)}
\le
\Omega_i^M
\le
\psi(Z_i)^M .
\]
Similarly, by Theorem~\ref{thm:lower_position} and the lower bound in Theorem~\ref{thm:Omega_Z_relation},
\begin{align*}
P_{e,i}^{(M)}
&\ge
\frac{1}{2}\bigl(1-\sqrt{1-\Omega_i^{2M}}\bigr) \\
&\ge
\frac{1}{2}
\Bigl(
1-\sqrt{1-\bigl(\tfrac{1+Z_i^2}{2}\bigr)^M}
\Bigr).
\end{align*}

The overall upper bound follows from Proposition~\ref{prop:overall_upper} together with the inequality $\Omega_i^M\le \psi(Z_i)^M$.
The overall lower bound follows from Proposition~\ref{prop:overall_lower} together with the position-wise lower bound established above.
This completes the proof.
\end{proof}

Corollary~2 expresses the \(\Omega_i\)-based recognition bounds in terms of the classical Bhattacharyya parameter \(Z_i\). This connects the recognition analysis to standard polar-code reliability measures. The resulting \(Z_i\)-based bounds are generally looser because \(Z_i\) summarizes synthetic-channel reliability rather than the full overlap
between the frozen- and information-position models.

\section{Simulation Results}

In this section, simulation results are presented to evaluate the proposed hypothesis-testing-based analysis for blind recognition of polar codes.
All polar codes are constructed by the Gaussian approximation (GA) method~\cite{GA} with a design value of $E_b/N_0=2$ dB, modulated using BPSK, and transmitted over the AWGN channel.
The design value is used only for polar-code construction, whereas the transmission SNR in the performance curves is represented by $E_s/N_0$.
Here, $C(N,K)$ specifies the actual polar code used to generate the simulated transmissions and to evaluate recognition errors.
During recognition, only the code length $N$ is assumed known, whereas both the dimension $K$ and the frozen/information position pattern are unknown, reflecting the non-cooperative setting considered in this work.

For each source position, the recognition statistic is accumulated over $M$ independent intercepted observations and compared with the neutral threshold $\ln 2$, corresponding to $\tau_i=0$ in Section~III-B.
A sequence-level error is declared if at least one position is incorrectly identified.
We consider two simulation procedures.
In the ideal SC-consistent procedure, the synthetic LLRs are evaluated under the correct-prefix condition used in the theoretical analysis.
In the realistic SC-recursive procedure, previous recognition and hard-decision results are fed back into subsequent SC operations.
The ideal procedure is used to evaluate the bounds in Section~IV,  whereas the realistic procedure is included to show the effect of recursive feedback.
The classical Bhattacharyya parameters $Z_i$ used in the theoretical bounds are computed by numerically integrating the corresponding synthetic-channel densities under the AWGN channel at each simulated $E_s/N_0$.

\begin{figure}[t]
  \centering
  \includegraphics[width=1\linewidth]{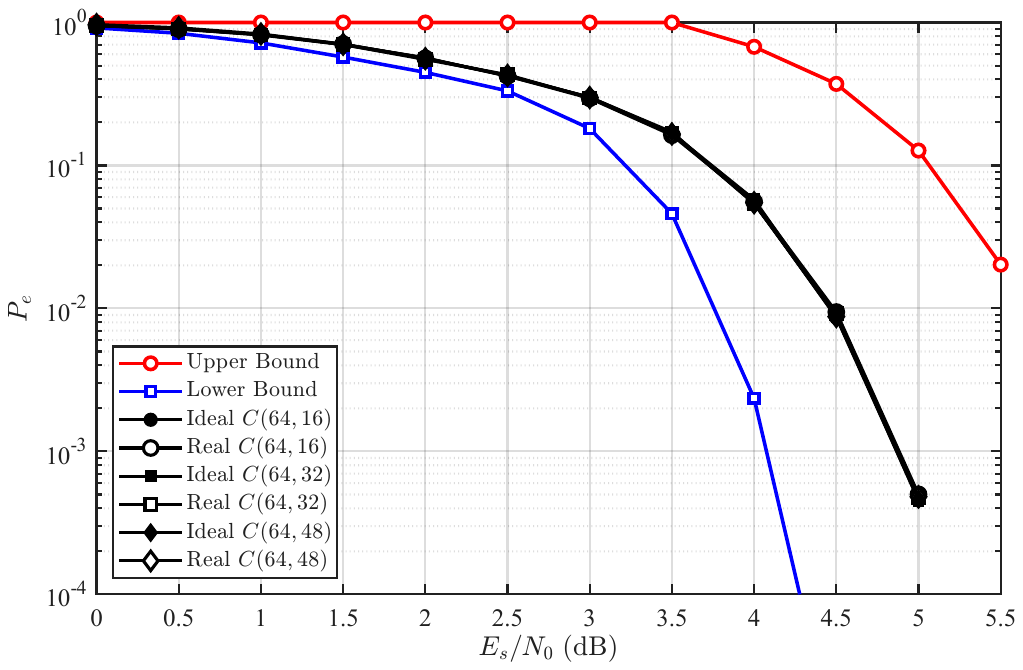}
  \caption{Performance comparison with $N=64,M=100$.}
  \label{fig:64M100}
\end{figure}

\begin{figure}[t]
  \centering
  \includegraphics[width=1\linewidth]{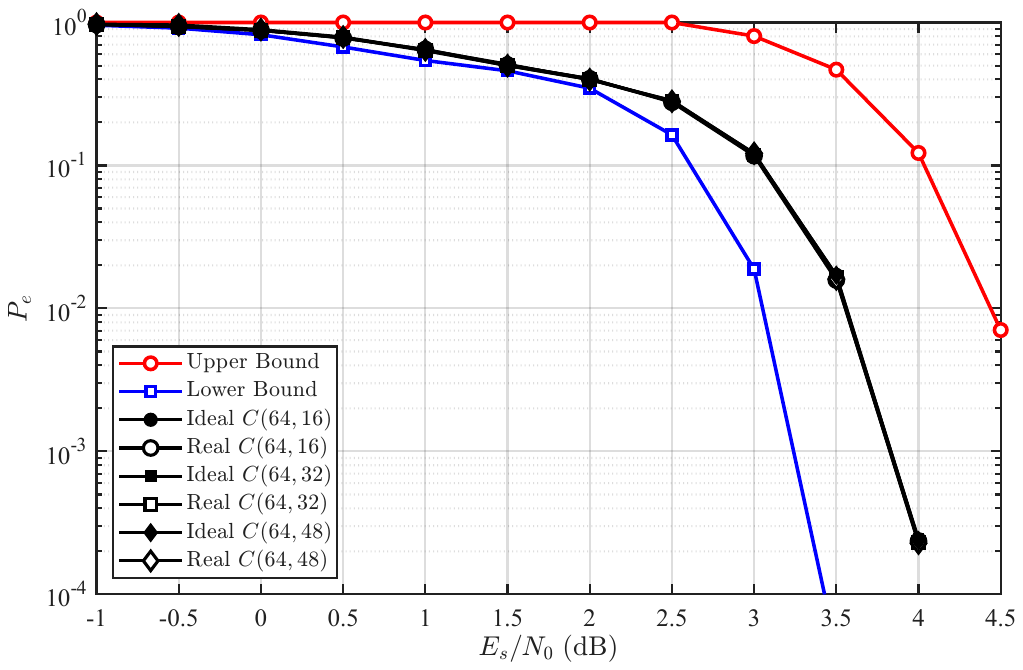}
  \caption{Performance comparison with $N=64,M=500$.}
  \label{fig:64M500}
\end{figure}

\begin{figure}[t]
  \centering
  \includegraphics[width=1\linewidth]{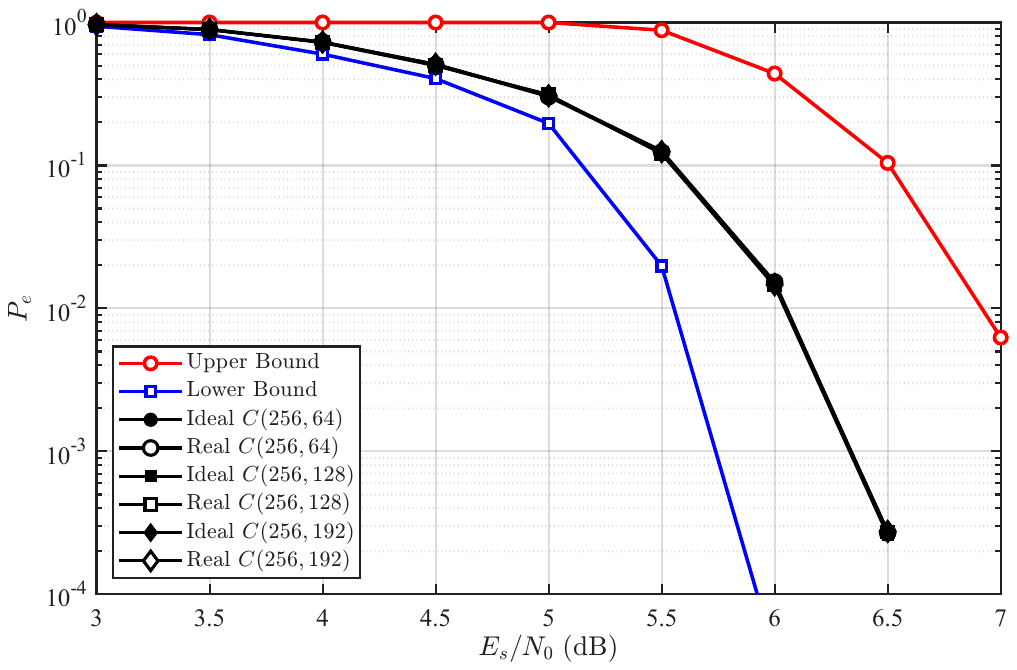}
  \caption{Performance comparison with $N=256,M=100$.}
  \label{fig:256M100}
\end{figure}

\begin{figure}[t]
  \centering
  \includegraphics[width=1\linewidth]{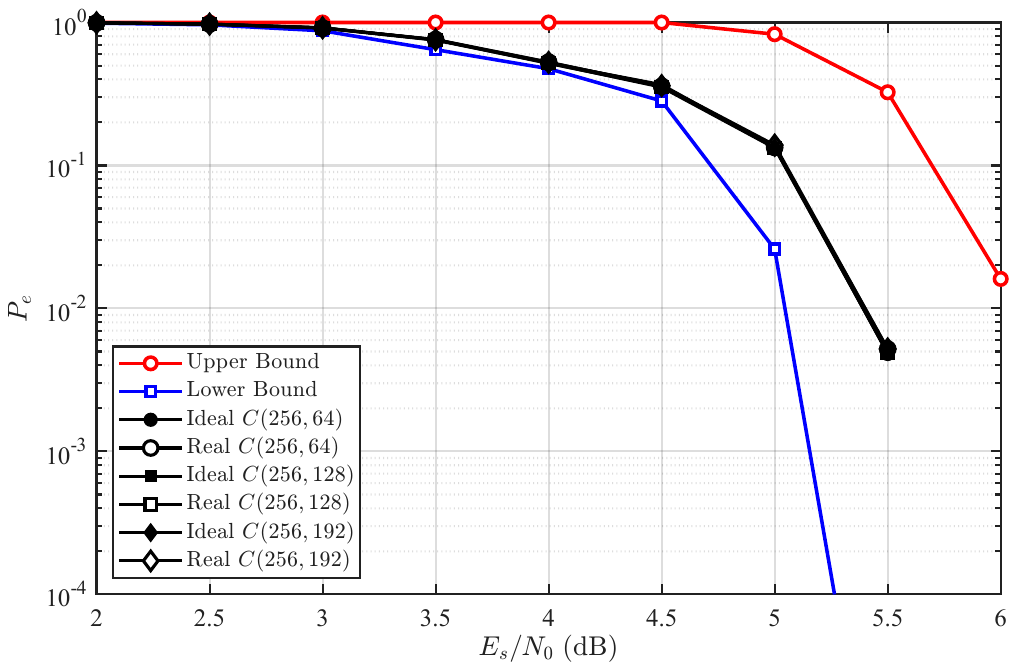}
  \caption{Performance comparison with $N=256,M=500$.}
  \label{fig:256M500}
\end{figure}

\subsection{Comparison with the Theoretical Bounds}

Figs.~\ref{fig:64M100}--\ref{fig:256M500} compare the sequence-level recognition error probability with the theoretical bounds in~\eqref{eq:Z_only_overall_upper} and~\eqref{eq:Z_only_overall_lower}.
The curves labeled ``Ideal'' represents the  ideal SC-consistent procedure.
The curves labeled ``Real'' represents the realistic SC-recursive procedure.

The comparisons are made for $N=64$ and $N=256$, with $M=100$ and $M=500$ intercepted observations.
For $N=64$, the transmitted polar codes are $C(64,16)$, $C(64,32)$, and $C(64,48)$.
For $N=256$, the transmitted polar codes are $C(256,64)$, $C(256,128)$, and $C(256,192)$.
The horizontal axis denotes $E_s/N_0$ in dB, and the vertical axis denotes the sequence-level recognition error probability.
The value of $K$ is used only to generate the transmitted polar code and is not provided to the recognition procedure.

As $E_s/N_0$ increases, all curves show a decrease in the sequence-level error probability.
From the hypothesis-testing viewpoint, a higher SNR increases the separation between the frozen-position and information-position observation laws, which reduces the probability of an incorrect local decision.

The number of intercepted observations affects the transition region of the curves.
For $N=64$, increasing $M$ from $100$ to $500$ moves the error-probability curve to a lower SNR range.
The sequence-level error probability reaches the order of $10^{-2}$ at approximately $4.5$ dB for $M=100$, while the same level is reached at approximately $3.55$ dB for $M=500$.
The same leftward shift is also observed for $N=256$.
This trend agrees with the bounds in Section~IV, where the relevant error terms decrease with the number of independent observations.

The effect of the code length is mainly reflected in the sequence-level criterion.
For a fixed $M$, the $N=256$ cases require a higher SNR than the corresponding $N=64$ cases to reach the same error level.
For example, when $M=100$, the transition to the $10^{-2}$ error-probability level occurs at approximately $4.5$ dB for $N=64$ and approximately $6.05$ dB for $N=256$.
With a longer code, more source positions must be identified, making the sequence-level error probability more sensitive to poorly distinguishable positions.

The ideal and realistic curves are close at the sequence level in the tested cases.
The remaining difference comes from the feedback in the realistic SC-recursive procedure, where an early wrong decision can affect subsequent synthetic LLRs and cause additional position errors.
This effect is examined in more detail by the paired diagnostic statistics in Section~V-B.
For fixed $N$, $M$, and SNR, the curves with different values of $K$ are also close, indicating that the sequence-level performance in these tests is not strongly affected by the transmitted code rate.

The ideal simulation results are consistent with the range indicated by the theoretical bounds.
The upper bound is conservative because it combines position-wise bounds and the Bhattacharyya-parameter relaxation, while the lower bound gives a reference for the intrinsic sequence-level difficulty under the SC-consistent model.

\begin{table*}[t]
\centering
\caption{Statistics for comparing the SC-consistent and realistic recursions with $N=64$, $M=100$.}
\label{tab:ideal_real_statistics}
\scriptsize
\setlength{\tabcolsep}{4.2pt}
\renewcommand{\arraystretch}{1.10}
\resizebox{\textwidth}{!}{%
\begin{tabular}{c c c c c c c c c c c}
\toprule
Code 
& SNR
& $P_{\rm seq}$
& $\widetilde P_{\rm seq}$
& $p_{00}$
& $p_{11}$
& $p_{10}$
& $p_{01}$
& $P_{\rm pos}$
& $\widetilde P_{\rm pos}$
& $P_{\rm FS}$  \\
\midrule

\multirow{6}{*}{$C(64,16)$}
& 0 dB & 0.9544 & 0.9544 & 0.0456 & 0.9544 & 0 & 0 & 0.03883 & 0.6262 & 1 \\
& 1 dB & 0.8212 & 0.8212 & 0.1788 & 0.8212 & 0 & 0 & 0.02257 & 0.5238 & 1 \\
& 2 dB & 0.5603 & 0.5603 & 0.4397 & 0.5603 & 0 & 0 & 0.01055 & 0.3357 & 1 \\
& 3 dB & 0.2962 & 0.2962 & 0.7038 & 0.2962 & 0 & 0 & 0.004680 & 0.1733 & 1 \\
& 4 dB & 0.05671 & 0.05671 & 0.9433 & 0.05671 & 0 & 0 & 0.0008861 & 0.03659 & 1 \\
& 5 dB & 0.0004745 & 0.0004745 & 0.9995 & 0.0004745 & 0 & 0 & 0.000007414 & 0.0003250 & 1 \\

\midrule

\multirow{6}{*}{$C(64,32)$}
& 0 dB & 0.9537 & 0.9537 & 0.0463 & 0.9537 & 0.00005 & 0 & 0.03858 & 0.3994 & 1 \\
& 1 dB & 0.8268 & 0.8268 & 0.1732 & 0.8268 & 0 & 0 & 0.02265 & 0.3424 & 1 \\
& 2 dB & 0.5598 & 0.5598 & 0.4403 & 0.5598 & 0 & 0 & 0.01064 & 0.2247 & 1 \\
& 3 dB & 0.2930 & 0.2930 & 0.7071 & 0.2930 & 0 & 0 & 0.004637 & 0.1169 & 1 \\
& 4 dB & 0.05679 & 0.05679 & 0.9432 & 0.05679 & 0 & 0 & 0.0008873 & 0.02481 & 1 \\
& 5 dB & 0.0004625 & 0.0004625 & 0.9995 & 0.0004625 & 0 & 0 & 0.000007227 & 0.0002172 & 1 \\

\midrule

\multirow{6}{*}{$C(64,48)$}
& 0 dB & 0.9520 & 0.9665 & 0.03125 & 0.9497 & 0.01675 & 0.00230 & 0.03864 & 0.1835 & 0.9970 \\
& 1 dB & 0.8295 & 0.8329 & 0.1657 & 0.8280 & 0.00485 & 0.00145 & 0.02259 & 0.1637 & 0.9999 \\
& 2 dB & 0.5653 & 0.5653 & 0.4347 & 0.5652 & 0.00010 & 0.00005 & 0.01081 & 0.1137 & 1 \\
& 3 dB & 0.2980 & 0.2980 & 0.7021 & 0.2980 & 0 & 0 & 0.004723 & 0.06035 & 1 \\
& 4 dB & 0.05721 & 0.05721 & 0.9428 & 0.05721 & 0 & 0 & 0.0008939 & 0.01266 & 1 \\
& 5 dB & 0.0004750 & 0.0004750 & 0.9995 & 0.0004750 & 0 & 0 & 0.000007422 & 0.0001119 & 1 \\

\bottomrule
\end{tabular}%
}
\end{table*}

\subsection{Diagnostic Statistics for the Ideal and Realistic Recursions}

Table~\ref{tab:ideal_real_statistics} reports paired diagnostic statistics for $N=64$ and $M=100$.
In each Monte Carlo trial, the SC-consistent and realistic SC-recursive procedures are applied to the same received channel-LLR matrix $\Lambda^{\rm ch}$.
Thus, the differences in the table are caused by the recursion mechanism rather than by different channel-noise realizations.

In the table, $P_{\rm seq}$ and $\widetilde P_{\rm seq}$ denote the sequence-level error probabilities of the SC-consistent and realistic procedures, respectively.
The quantities $p_{00}$, $p_{11}$, $p_{10}$, and $p_{01}$ classify the paired sequence-level outcomes.
The first subscript refers to the realistic procedure and the second to the SC-consistent procedure, with $0$ denoting success and $1$ denoting failure.
Thus, $p_{00}$ and $p_{11}$ correspond to matched outcomes, whereas $p_{10}$ and $p_{01}$ correspond to mismatch cases.

The sequence-level statistics show that the two procedures give almost the same failure events in most tested cases.
For $C(64,16)$, $P_{\rm seq}$ and $\widetilde P_{\rm seq}$ are identical at all listed SNR points, and $p_{10}=p_{01}=0$.
For $C(64,32)$, a mismatch appears only at $0$ dB, where $p_{10}=5\times 10^{-5}$ and $p_{01}=0$.
For $C(64,48)$, the mismatch is more visible at low SNR, with $p_{10}=0.01675$ and $p_{01}=0.00230$ at $0$ dB, but both mismatch probabilities become zero from $3$ dB onward.
These values are consistent with the close ideal and realistic curves observed in Figs.~\ref{fig:64M100}--\ref{fig:64M500}.

The table also shows that the sequence-level error probability is only weakly affected by the transmitted code rate in this set of simulations.
For example, at $2$ dB, the SC-consistent sequence-level error probabilities are $0.5603$, $0.5598$, and $0.5653$ for $C(64,16)$, $C(64,32)$, and $C(64,48)$, respectively.
The corresponding realistic values are also close.
This agrees with the rate-insensitive behavior observed in Figs.~\ref{fig:64M100}--\ref{fig:64M500}.

The position-level statistics show a different result.
The quantities $P_{\rm pos}$ and $\widetilde P_{\rm pos}$ denote the normalized position-level error rates of the SC-consistent and realistic procedures, respectively.
At $2$ dB, $P_{\rm pos}$ remains around $1.1\times 10^{-2}$ for all three code rates, whereas $\widetilde P_{\rm pos}$ is $0.3357$, $0.2247$, and $0.1137$ for $C(64,16)$, $C(64,32)$, and $C(64,48)$, respectively.
Thus, even when the sequence-level error probabilities are nearly the same, the realistic recursion can have a much larger normalized position-level error rate.

This position-level gap comes from feedback in the realistic SC-recursive procedure.
In the SC-consistent procedure, each position is evaluated under the correct-prefix condition, so a wrong decision at one position does not change the synthetic LLRs used at later positions.
In the realistic procedure, previous recognition and hard-decision results are fed back into subsequent SC operations.
An early wrong decision may therefore distort later synthetic LLRs, especially through recursive $g$-operations, and trigger additional position errors.

The first-error statistics clarify why this position-level discrepancy does not lead to a comparable sequence-level discrepancy.
Here, $P_{\rm FS}$ denotes the probability that the two procedures have the same first erroneous position, conditioned on the event that both procedures fail.
For $C(64,16)$ and $C(64,32)$, $P_{\rm FS}=1$ at all listed SNR points.
For $C(64,48)$, $P_{\rm FS}$ is close to one at low SNR and becomes one from $2$ dB onward.
Hence, in the common failure cases under the same received LLRs, the two procedures usually first fail at the same source position.
After this first error, the realistic recursion may accumulate additional position errors, but the sequence-level failure event has already occurred.
As a result, the two procedures can have close sequence-level error probabilities while showing very different position-level error rates.

\section{Conclusion}
This paper provided a hypothesis-testing interpretation of soft-metric-based blind recognition for polar codes.
Under the SC-consistent synthetic-channel model, the frozen-position and information-position hypotheses lead to two different observation laws, and the adopted soft recognition metric is shown to be a shifted log-likelihood ratio between them.
This result explains why $\ln 2$ serves as the neutral threshold for the averaged metric, while unequal priors or costs correspond to a threshold shift.

Under the neutral rule, the analysis with a finite number of intercepted observations yields both position-wise and sequence-level bounds.
The bounds are determined by the overlap coefficient between the frozen-position and information-position models, which is further linked to the Bhattacharyya parameter of the corresponding synthetic channel.
The simulations show that the recognition performance improves with more intercepted observations and higher SNR, but becomes more difficult at the sequence level when the code length increases.
The paired tests show that the SC-consistent and realistic SC-recursive procedures usually fail first at the same position, leading to close sequence-level performance in the tested settings.


\begin{thebibliography}{99}

\bibitem{Dobre2007}
M.~Z. Hameed, A. Gy\"orgy, and D. G\"und\"uz, 
``The best defense is a good offense: Adversarial attacks to avoid modulation detection,'' 
\emph{IEEE Trans. Inf. Forensics Security}, 
vol.~16, pp.~1074--1087, 2021.


\bibitem{Shiu2011}
N. Xie, Z. Li, and H.~J. Tan, 
``A survey of physical-layer authentication in wireless communications,'' 
\emph{IEEE Commun. Surveys Tuts.}, 
vol.~23, no.~1, pp.~282--310, First Quart. 2021.

\bibitem{MoosaviLarsson2011}
S. Ramabadran, A.~S. Madhu Kumar, W. Guohua, and T.~S. Kee, 
``Blind recognition of LDPC code parameters over erroneous channel conditions,'' 
\emph{IET Signal Process.}, 
vol.~13, no.~1, pp.~86--95, Feb. 2019.

\bibitem{Xu2011}
W. Wang, Z. Sun, K. Ren, B. Zhu, and S. Piao, 
``Wireless physical-layer identification: Modeling and validation,'' 
\emph{IEEE Trans. Inf. Forensics Security}, 
vol.~11, no.~9, pp.~2091--2106, Sept. 2016.

\bibitem{MoosaviLarsson2014}
R. Moosavi and E. G. Larsson, ``Fast blind recognition of channel codes,'' \emph{IEEE Trans. Commun.}, vol. 62, no. 5, pp. 1393--1405, May 2014.

\bibitem{Dehdashtian2021}
S. Dehdashtian, M. Hashemi, and S. Salehkaleybar, ``Deep-learning-based blind recognition of channel code parameters over candidate sets under AWGN and multi-path fading conditions,'' \emph{IEEE Wireless Commun. Lett.}, vol. 10, no. 5, pp. 1041--1045, May 2021.

\bibitem{Mukherjee2014}
A. Mukherjee, S. A. A. Fakoorian, J. Huang, and A. L. Swindlehurst, ``Principles of physical layer security in multiuser wireless networks: A survey,'' \emph{IEEE Commun. Surveys Tuts.}, vol. 16, no. 3, pp. 1550--1573, 3rd Quart. 2014.

\bibitem{PoorSchaefer2017}
H. V. Poor and R. F. Schaefer, ``Wireless physical layer security,'' \emph{Proc. Natl. Acad. Sci. U.S.A.}, vol. 114, no. 1, pp. 19--26, Jan. 2017.


\bibitem{Arikan2009}
E. Ar{\i}kan, ``Channel polarization: A method for constructing capacity-achieving codes for symmetric binary-input memoryless channels,'' \emph{IEEE Trans. Inf. Theory}, vol. 55, no. 7, pp. 3051--3073, Jul. 2009.


\bibitem{TalVardy2015}
I. Tal and A. Vardy, ``List decoding of polar codes,'' \emph{IEEE Trans. Inf. Theory}, vol. 61, no. 5, pp. 2213--2226, May 2015.

\bibitem{NiuChen2012}
K. Niu and K. Chen, ``CRC-aided decoding of polar codes,'' \emph{IEEE Commun. Lett.}, vol. 16, no. 10, pp. 1668--1671, Oct. 2012.

\bibitem{TS38212}
3rd Generation Partnership Project (3GPP), ``NR; Multiplexing and channel coding,'' TS 38.212, Rel. 15, Jul. 2018.

\bibitem{Condo2017}
C. Condo, S. A. Hashemi, and W. J. Gross, ``Blind detection with polar codes,'' \emph{IEEE Commun. Lett.}, vol. 21, no. 12, pp. 2550--2553, Dec. 2017.

\bibitem{Liu2022}
J. Liu, T. Zhang, H. Bai, and S. Ye, ``Blind recognition algorithm of polar code based on information matrix estimation,'' \emph{Syst. Eng. Electron.}, vol. 44, no. 2, pp. 668--676, 2022.

\bibitem{Yi2023}
C. Yi, B. Pang, L. He, B. Ma, Y. Li, and F. C. M. Lau, ``Blind identification of polar codes based on estimation and derivation approaches,'' \emph{IEEE Commun. Lett.}, vol. 27, no. 2, pp. 414--418, Feb. 2023.

\bibitem{Xu2026}
P. Xu, J. Liu, A. Wang, C. Yi, and Q. Li, ``Blind recognition of polar code information bits based on multi-threshold voting and partial orders,'' \emph{IEEE Commun. Lett.}, vol. 30, pp. 887--891, 2026.

\bibitem{Wang2023}
Y. Wang, C. Wang, X. Wang, and Z. Huang, ``Non-punctured polar code parameter recognition algorithm based on soft decision,'' \emph{Syst. Eng. Electron.}, vol. 45, no. 10, pp. 3293--3301, Oct. 2023.


\bibitem{TuBSCL}
C. Tu, Y. Liu, X. Feng, and K. Niu, ``Blind recognition of polar codes using successive cancellation list decoding,'' \emph{arXiv preprint arXiv:2605.13331}, May 2026.


\bibitem{VanTrees2001}
H. L. Van Trees, \emph{Detection, Estimation, and Modulation Theory, Part I}. New York, NY, USA: Wiley, 2001.

\bibitem{CoverThomas2006}
T. M. Cover and J. A. Thomas, \emph{Elements of Information Theory}, 2nd ed. Hoboken, NJ, USA: Wiley-Interscience, 2006.

\bibitem{GA}	
P. Trifonov, “Efficient design and decoding of polar codes,” \emph{IEEE Trans. Commun.}, vol. 60, no. 11, pp. 3221–3227, Nov. 2012.

\end{thebibliography}
\end{document}